\newcommand{\R}{\mathbb{R}}
\newcommand{\Exp}{\mathbb{E}}
\newcommand{\dd}{\mathrm{d}}
\newcommand{\ee}{\mathrm{e}}
\newcommand{\dto}{\downarrow}
\newcommand{\uto}{\uparrow}
\DeclareMathOperator{\tr}{tr}
\DeclareMathOperator{\dive}{div}
\DeclareMathOperator*{\mysimeq}{\sim}
\newcommand{\OrdExp}[1]{\left\lfloor \exp\left(#1\right)\right\rfloor}
\newcommand{\ordexp}[1]{\lfloor \exp(#1)\rfloor}
\begin{document}

\title{Path integral derivation and numerical computation of large deviation prefactors for non-equilibrium dynamics through matrix Riccati equations
}

\titlerunning{NESS prefactors and matrix Riccati equations}        

\author{Freddy Bouchet         \and
        Julien Reygner
}


\institute{F. Bouchet \at
              Univ Lyon, Ens de Lyon, Univ Claude Bernard, CNRS, Laboratoire de Physique, Lyon, France  \\
              \email{Freddy.Bouchet@ens-lyon.fr}           
           \and
           J. Reygner \at
              CERMICS, Ecole des Ponts, Marne-la-Vallée, France\\
              \email{julien.reygner@enpc.fr}
}

\date{}

\maketitle

\begin{abstract}

For many non-equilibrium dynamics driven by small noise, in physics, chemistry, biology, or economy, rare events do matter. Large deviation theory then explains that the leading order term of the main statistical quantities have an exponential behavior. The exponential rate is often obtained as the infimum of an action, which is minimized along an instanton. In this paper, we consider the computation of the next order sub-exponential prefactors, which are crucial for a large number of applications. Following a path integral approach, we derive the dynamics of the Gaussian fluctuations around the instanton and compute from it the sub-exponential prefactors. As might be expected, the formalism leads to the computation of functional determinants and matrix Riccati equations. By contrast with the cases of equilibrium dynamics with detailed balance or generalized detailed balance, we stress the specific non locality of the solutions of the Riccati equation: the prefactors depend on fluctuations all along the instanton and not just at its starting and ending points. We explain how to numerically compute the prefactors. The case of statistically stationary quantities requires considerations of non trivial initial conditions for the matrix Riccati equation. 

\end{abstract}

\keywords{Non-equilibrium statistical physics  \and Rare events \and Large deviation theory \and  Arrhenius law \and  Sub-exponential prefactors \and Stochastic differential equations}


\section{Introduction}

\subsection{Rare events, instantons and sub-exponential prefactors} Many systems in physics, chemistry, economics or biology can be described by stochastic differential equations with small noise. In such cases many statistical quantities, for instance the invariant distribution, first exit times, mean first passage times or transition probabilities have asymptotic exponential behavior $C^\epsilon\exp(-I/\epsilon)$, where $I$ is the exponential rate, $\epsilon$ is the noise amplitude, and $C^\epsilon$ is a sub-exponential prefactor (see below for a more precise definition). 

This mathematical remark has profound consequences in physics. The most classical examples of such exponential rates are the Arrhenius law, or thermodynamic potentials\footnote{Thermodynamic potentials are usually defined as static properties, independently of the dynamics, but they also appear as quasipotential in effective dynamical theory, a classical example being macroscopic fluctuation theory~\cite{BerDeSGabJonLan15}.}. Besides static properties, from a dynamical perspective, when conditioned on the occurence of a rare event, path probabilities often concentrate close to a predictable path, called instanton. This is a key and fascinating property for the dynamics of rare events and of their impact, which was first observed in statistical physics, for the nucleation of a classical supersaturated vapor~\cite{langer_1967_condensation_point}. Soon after, a similar concentration of path probabilities has been studied in gauge field theories~\cite{Coleman:1978ae,zinn1996quantum}, for instance for the Yang--Mills theory. Instantons continue to have number of applications in modern statistical physics, for instance to describe excitation chains at the glass transition~\cite{langer2006excitation}, reaction paths in chemistry~\cite{kampen_stochastic_2007}, escape of Brownian particles in soft matter~\cite{woillez2019escape}. 

The computation of the rate $I$ is the subject of classical techniques using Laplace asymptotics, for instance in classical or path integrals~\cite{Coleman:1978ae,zinn1996quantum,Gra88}. At the mathematical level, this is the subject of large deviation theory, see for instance the Freidlin--Wentzell theory for small noise large deviations~\cite{FreWen12}. However, for most genuine applications, computing the rate $I$ is not sufficient and a proper estimation of the sub-exponential prefactor $C^\epsilon$, or of its asymptotic behavior in the limit of small noise $\epsilon \dto 0$, is required. From a field theory perspective, such computations require the estimation of the path integrals at next to leading order. Such computations are very classical in the field theory context and involve the estimation of expectations over Gaussian processes, for which solutions of Riccati equations are needed. Several classical tools and approaches have been devised in quantum field theory and for equilibrium problems, often on a case by case basis (see for instance~\cite{Callan:1977pt,zinn1996quantum} as examples among many others).

Recently, rare events, instantons, transitions rates, have been studied in far from equilibrium systems and non-equilibrium steady states, were one starts from dynamics without detailed balance. The statistical mechanics approaches have then be extended to scientific fields so far unexpected. For instance rare events, instantons, and related concepts have been used in turbulence~\cite{grafke2013instanton,LAURIE:2015:A,grafke2015efficient,bouchet2019rare,dematteis2018rogue}, atmosphere dynamics~\cite{bouchet2019rare,simonnet2021multistability}, climate dynamics~\cite{RAGONE:2018:A}, astronomy~\cite{woillez2020instantons,abbot2021rare}, among many other examples. Moreover, a large effort has been pursued to develop dedicated numerical approches to compute the related instantons~\cite{grafke2019numerical}.
	
For all of these cases, it is essential to go beyond the computation of the exponential rates and to compute the sub-exponential prefactors. Then, although formally classical ideas still apply, several simplifications related to equilibrium dynamics no more occur. Often, the extent of possible analytic simplifications is limited, and one has to rely more on numerical simulations for actual computations. The aim of this paper is to develop the theoretical analysis at a level were it will be useful for devising numerical algorithms. We will thus consider stochastic differential equations with small noise, and develop the formalism and show the potential for numerical computations. The numerical computation will be illustrated on a simple example. 
	
At a technical level, we will derive the needed matrix Riccati equations, discuss their initial conditions which are not trivial, for instance in the case of statistically stationary quantities, and explain the relation between the matrix Riccati equations and the quantities of interest. 

\subsection{Setting} In this article, we consider systems described by the stochastic differential equation
\begin{equation}\label{eq:SDE}
  \dd X^{\epsilon}_s = b(X^{\epsilon}_s)\dd s + \sqrt{2\epsilon} \dd W_s, \qquad s \geq 0,
\end{equation}
in $\R^d$, where $b : \R^d \to \R^d$ is a smooth vector field, $W$ is a $d$-dimensional Brownian motion, and $\epsilon > 0$ can be interpreted as a temperature parameter. In general, the diffusion process $X^\epsilon$ defined by~\eqref{eq:SDE} is not reversible, so that it serves as a model for physical systems driven out of equilibrium. In particular, when the process possesses a stationary distribution $P^\epsilon$ --- called the \emph{non-equilibrium steady state}, the latter may not be explicit. Still, the Freidlin--Wentzell theory~\cite{FreWen12} asserts that in the $\epsilon \dto 0$ limit and under suitable assumptions which we shall detail in Section~\ref{s:C} below, $P^\epsilon$ satisfies a large deviation principle, with a rate function $V$ called the \emph{quasipotential} and defined in~\eqref{eq:V}. We shall formally denote by
\begin{equation}\label{eq:FW}
  \lim_{\epsilon \dto 0} -\epsilon \log P^{\epsilon}(x) = V(x)
\end{equation}
this large deviation principle, and we refer to~\cite{DemZei10} for standard material on large deviation theory. The quasipotential is known to play the role of an entropy function for non-equilibrium models~\cite{FreWen12,Gra88,BerDeSGabJonLan15}. 

Let us define the large deviation \emph{prefactor} $C^\epsilon(x)$ to the non-equilibrium steady state $P^\epsilon$ by the identity
\begin{equation}\label{eq:Cepsilon}
  P^{\epsilon}(x) = C^\epsilon(x)\exp\left(-\frac{V(x)}{\epsilon}\right).
\end{equation}
An equivalent formulation of~\eqref{eq:FW} is the assertion that $\lim_{\epsilon \dto 0} \epsilon \log C^\epsilon(x) = 0$, so that no precise information on the prefactor can be obtained merely from large deviation theory. However, combining notions from this theory with a WKB approximation, we derived an asymptotic equivalent of the prefactor in~\cite{BouRey16}, recalled in Equation~\eqref{eq:equivC} below. 

The purpose of the present article is to continue our study of the prefactor, by proposing an alternative method to obtain~\eqref{eq:equivC}, based on the path integral formulation, and presenting a numerical method to effectively compute the terms appearing in this formula. Both tasks rely on the study of matrix-valued Riccati equations satisfied by quantities related to the fluctuations of the diffusion process $X^\epsilon$ around given deterministic paths.


\subsection{Organization of the paper}

In Section~\ref{s:C}, we state the asymptotic equivalent of the prefactor obtained in~\cite{BouRey16} and recall the notions of large deviation theory involved in this formula. In Section~\ref{s:path}, we describe an alternative method leading to the same formula, which is based on the path integral formulation of the non-equilibrium steady state and establishes a connection between the prefactor and the fluctuations of the process $X^\epsilon$ around its most probable path. The relation with sharp asymptotics for mean exit times is discussed in Section~\ref{s:exit}. Section~\ref{s:num} is dedicated to the numerical computation of the asymptotic equivalent of the prefactor, which we apply on an illustrative example in Section~\ref{s:na}. The conclusive Section~\ref{s:conc} summarizes the main contribution of the paper and discusses their relation with other recent works.


\subsection{Notations}\label{ss:not} The Euclidian norm and inner product of $\R^d$ are respectively denoted by $\|\cdot\|$ and $\langle \cdot, \cdot\rangle$. We write $I_d$ for the identity matrix of size $d \times d$. Given a smooth function $f : \R^d \to \R$ (typically, $V$ or $b_k$, $\ell_k$), we denote by $\nabla f$ and $\nabla^2 f$ the gradient and Hessian matrix of $f$. Similarly, given a smooth vector field $F : \R^d \to \R^d$ (typically $b$ or $\ell$), $\nabla F(x)$ refers to the matrix with coefficient $\partial_j F_i(x)$ at the $i$-th row and $j$-th column (that is to say, $\nabla F$ is the Jacobian matrix of $F$), and $\nabla^2F(x)(y,y)$ refers to the vector whose $i$-th coordinate is $\langle y, \nabla^2 F_i(x) y\rangle$. The symbols $\Delta$ and $\dive$ refer to the usual Laplacian and divergence operators of scalar functions. The closure of an open set $D \subset \R^d$ is denoted by $\bar{D}$. For two positive quantities $u_\epsilon$, $v_\epsilon$ indexed by $\epsilon>0$, the notation $u_\epsilon \mysimeq_{\epsilon \dto 0} v_\epsilon$ means that the ratio $u_\epsilon/v_\epsilon$ converges to $1$ when $\epsilon \dto 0$.


\section{Prefactor for the stationary distribution}\label{s:C}

In this section, we first precise the assumptions under which we shall work. We then provide a very brief summary of the main notions from the Freidlin--Wentzell theory on which we shall rely. Finally, we provide an asymptotic equivalent for the prefactor $C^\epsilon(x)$, obtained in~\cite{BouRey16} through a WKB approximation.


\subsection{Assumptions on the vector field $b$} Throughout this section, we make the following assumptions:
\begin{itemize}
  \item[(A1)] the deterministic system $\dot{x}=b(x)$ possesses a unique equilibrium point $\bar{x} \in \R^d$, which attracts all the trajectories,
  \item[(A2)] for all $\epsilon>0$, the SDE~\eqref{eq:SDE} possesses a unique stationary measure $P^{\epsilon}$.
\end{itemize}

\begin{remark}
  Assumption~(A1) allows to streamline the exposition of our results. In the more general case where the deterministic system $\dot{x}=b(x)$ possesses several (isolated) equilibrium points, our arguments can be adapted and produce statements that are valid in the neighborhood of the equilibrium points.
\end{remark}


\subsection{Action functional, quasipotential and Hamilton--Jacobi equation}\label{ss:FW}

We recall a few notions from the Freidlin--Wentzell theory~\cite[Chapter~4]{FreWen12} that will be useful in the paper.

The \emph{action functional} for the stochastic differential equation~\eqref{eq:SDE} is defined for a trajectory $\phi = (\phi_s)_{t_1 \leq s \leq t_2}$ on the time interval $[t_1,t_2]$ by
\begin{equation}
  \mathcal{A}_{t_1,t_2}[\phi] = \begin{cases}
    \int_{s=t_1}^{t_2} \mathcal{L}(\phi_s, \dot{\phi}_s) \dd s & \text{if $\phi$ is absolutely continuous},\\
    +\infty & \text{otherwise},  \end{cases}
\end{equation}
where the Lagrangian $\mathcal{L}$ writes
\begin{equation}
  \forall (x,v) \in \R^d \times \R^d, \qquad \mathcal{L}(x,v) = \frac{1}{4}\|v-b(x)\|^2.
\end{equation}
It describes the large deviations of the trajectory $(X^{\epsilon}_s)_{s \in [t_1,t_2]}$ when $\epsilon \dto 0$~\cite[Theorem~1.1, p.~86]{FreWen12}.

The \emph{quasipotential} with respect to $\bar{x}$ is defined by the variational formula
\begin{equation}\label{eq:V}
  V(x) = \inf\{\mathcal{A}_{t_1,t_2}[\phi], \phi_{t_1} = \bar{x}, \phi_{t_2} = x, t_1 < t_2\},
\end{equation}
where the infimum runs over all finite time intervals $[t_1,t_2]$ and trajectories $\phi = (\phi_s)_{t_1 \leq s \leq t_2}$. The quasipotential is nonnegative. Besides, it is known that if it is smooth, then it solves the Hamilton--Jacobi equation
\begin{equation}\label{eq:HJ}
  \langle \nabla V(x), \nabla V(x)\rangle + \langle b(x), \nabla V(x)\rangle = 0,
\end{equation}
in domains of $\R^d$ in which, in particular, $\bar{x}$ is the only critical point of $V$. We refer to the discussion in~\cite[pp.~100--101]{FreWen12} for details. In this work, we shall assume that this property actually holds globally:
\begin{itemize}
  \item[(A3)] the quasipotential $V$ is $C^2$ on $\R^d$, $\nabla V(x) \not=0$ for any $x \not= \bar{x}$, and the Hamilton--Jacobi equation~\eqref{eq:HJ} holds on $\R^d$.
\end{itemize}
Equivalently, the vector field $\ell$ defined by 
\begin{equation}\label{eq:decomp}
  b(x) = -\nabla V(x) + \ell(x)
\end{equation}
satisfies
\begin{equation}\label{eq:transv}
  \forall x \in \R^d, \qquad \langle \nabla V(x), \ell(x)\rangle = 0.
\end{equation}
Assumption~(A3) then implies that the quasipotential satisfies the identity
\begin{equation}\label{eq:VMAP}
  V(x) = \int_{s=-\infty}^0 \mathcal{L}(\varphi^x_s,\dot{\varphi}^x_s)\dd s = \mathcal{A}_{-\infty,0}[\varphi^x],
\end{equation}
where $\varphi^x = (\varphi^x_s)_{s \leq 0}$ is the minimum action path joining $\bar{x}$ to $x$, defined as the unique solution to the backward Cauchy problem
\begin{equation}\label{eq:varphi}
  \left\{\begin{aligned}
    & \dot{\varphi}^x_s = \nabla V(\varphi^x_s) + \ell(\varphi^x_s), \quad s \leq 0,\\
    & \varphi^x_0 = x, \quad \lim_{s \to -\infty} \varphi^x_s = \bar{x}.
  \end{aligned}\right.
\end{equation}
The identity~\eqref{eq:VMAP} shows in particular that in~\eqref{eq:V}, the quantity $V(x)$ could be equivalently defined as $\inf\{\mathcal{A}_{-\infty,0}[\phi], \lim_{t \to -\infty} \phi_t = \bar{x}, \phi_0 = x\}$. In the sequel of the paper, we shall call $\varphi^x$ the \emph{fluctuation path}, as it describes the most probable path followed by a fluctuation of the diffusion process $X^\epsilon$ joining $\bar{x}$ to $x$.

\begin{remark}
  Under Assumption~(A3), differentiating the equality $\langle \nabla V, \ell\rangle = 0$ twice, we get the identity
  \begin{equation}\label{eq:D2transv}
    \sum_{k=1}^d (\partial_k \nabla^2 V) \ell_k + \nabla^2 V \nabla \ell + \nabla \ell^{\top} \nabla^2 V + \sum_{k=1}^d (\partial_k V) \nabla^2 \ell_k = 0,
  \end{equation}
  which shall be useful in the course of the paper. We recall here that the notation $\nabla^2$ refers to the Hessian matrix, see Section~\ref{ss:not}.
\end{remark}

\begin{remark}
  Putting~\eqref{eq:decomp} and~\eqref{eq:varphi} together shows that the fluctuation path, the quasipotential and the drift $b$ satisfy the identity
  \begin{equation}\label{eq:V-varphi-b}
    \nabla V(\varphi^x_s) = \frac{1}{2}\left(\dot{\varphi}^x_s-b(\varphi^x_s)\right),
  \end{equation}
  which will be used in the sequel of the paper.
\end{remark}

\begin{remark} Assumption~(A3) is satisfied in particular if the vector field $b$ is known to possess an explicit transverse decomposition of the form $b = -\nabla V + \ell$, for some smooth function $V : \R^d \to \R$ which is such that $\langle \nabla V, \ell \rangle = 0$ on $\R^d$, and $\nabla V(x) \not= 0$ if $x \not= \bar{x}$. In this case, $V$ can be then shown to coincide with the quasipotential with respect to $\bar{x}$, defined by the right-hand side of the identity~\eqref{eq:V}. However, even if the vector field $b$ is smooth, it is in general difficult to prove directly that the quasipotential, defined by the right-hand side of~\eqref{eq:V}, is smooth.
\end{remark}


\subsection{WKB derivation of the formula for the prefactor}\label{ss:intro-wkb}

Injecting both the decomposition~\eqref{eq:decomp} of the vector field $b$ and the ansatz~\eqref{eq:Cepsilon} in the stationary Fokker--Planck equation
\begin{equation}\label{eq:sFP}
  \epsilon \Delta P^\epsilon - \dive(b P^\epsilon) = 0
\end{equation}
shows that $P^\epsilon$ is equal to the Gibbs measure
\begin{equation}\label{eq:Gibbs}
  P^\epsilon_{\mathrm{Gibbs}}(x) = \frac{1}{Z^\epsilon} \exp\left(-\frac{V(x)}{\epsilon}\right), \qquad Z^\epsilon = \int_{x \in \R^d} \exp\left(-\frac{V(x)}{\epsilon}\right)\dd x,
\end{equation}
if and only if the condition
\begin{equation}\label{eq:divl}
  \forall x \in \R^d, \qquad \dive \ell(x) = 0
\end{equation}
holds. In this case, the prefactor $C^\epsilon$ does not depend on $x$ and the Laplace approximation for $Z^\epsilon$ provides the asymptotic equivalence
\begin{equation}
  C^\epsilon \mysimeq_{\epsilon \dto 0} \sqrt{\frac{\det \nabla^2 V(\bar{x})}{(2\pi\epsilon)^d}},
\end{equation}
as soon as the following assumption holds:
\begin{itemize}
  \item[(A4)] the matrix $\nabla^2 V(\bar{x})$ is positive-definite.
\end{itemize}

If the condition~\eqref{eq:divl} does not hold, an expansion in powers of $\epsilon$ can be performed in~\eqref{eq:sFP}, following the usual WKB approximation method. This leads to the prefactor equivalent
\begin{equation}\label{eq:equivC}
  C^\epsilon(x) \mysimeq_{\epsilon \dto 0} \sqrt{\frac{\det \nabla^2 V(\bar{x})}{(2\pi\epsilon)^d}}\exp\left(-\int_{s=-\infty}^0 \dive\ell(\varphi^x_s)\dd s\right),
\end{equation}
which was derived in~\cite[Section~3]{BouRey16}, see also previous results in~\cite{CohLew67,Lud75,MaiSte97,Sch09}. The supplementary exponential term appearing in the right-hand side of~\eqref{eq:equivC} depends on the value of $\dive\ell$ along the whole trajectory of the fluctuation path $\varphi^x$. Therefore, we shall call it the \emph{nonlocal contribution} to the prefactor.


\section{Derivation of the formula from the path integral formulation}\label{s:path}

In this section, we still work under Assumptions~(A1--4), and describe an alternative method to the WKB approximation in order to derive the formula~\eqref{eq:equivC} for the prefactor to the stationary distribution $P^\epsilon$. The method is based on the path integral formulation of $P^\epsilon$, in which a Laplace approximation is performed, leading to the computation of an infinite-dimensional Gaussian integral. The latter involves the fluctuations of the process $X^\epsilon$ around the path $\varphi^x$, and thanks to the Feynman--Kac formula, it is computed by solving a backward matrix Riccati equation.

Our derivation is rather formal; in particular, the Laplace approximation in the path integral formulation is not rigorously justified. However, and although it is certainly less straightforward than the direct WKB approximation sketched in the previous section (which is also nonrigorous), the method presented here has the conceptual advantage to establish a connection between the prefactor, written under the form of a functional determinant, and the process of fluctuations of $X^\epsilon$ around the path $\varphi^x$. The latter process was recently studied in the mathematical literature~\cite{LuStuWeb16,SanStu16}.

\subsection{Laplace approximation in the path integral formulation}\label{ss:laplace} In this subsection, we fix $X^{\epsilon}_0 = \bar{x}$. Then, observables of the trajectory $(X^\epsilon_s)_{s \in [0,t]}$ write in the path integral formalism
\begin{equation}
  \mathbb{E}\left[F\left((X^\epsilon_s)_{s \in [0,t]}\right)\right] = \int_{\phi_0=\bar{x}} F[\phi]
  \ee^{-\frac{1}{\epsilon}\mathcal{A}_{0,t}[\phi]}
  \mathcal{D}[\phi],
\end{equation}
where the integral is taken over all absolutely continuous, $\R^d$-valued trajectories $\phi=(\phi_s)_{s \in [0,t]}$ such that $\phi_0=\bar{x}$. This fact can be obtained following the standard time-discretization construction of path integrals, using Ito's convention for the discretization. Alternatively, with a more probabilistic point of view, it may be derived from the Girsanov theorem, once one takes the convention to denote by 
$\ee^{-\frac{1}{4\epsilon}\int_{s=0}^t \|\dot{\phi}_s\|^2 \dd s}\mathcal{D}[\phi]$ 
the law of the Brownian trajectory $(\bar{x}+\sqrt{2\epsilon}W_s)_{s \in [0,t]}$. 

In this formalism, we deduce that the density of the random variable $X^\epsilon_t$ writes
\begin{equation}
  P^\epsilon_t(x) = \int_{\phi_0=\bar{x}} \delta_0(\phi_t-x)
  \ee^{-\frac{1}{\epsilon}\mathcal{A}_{0,t}[\phi]}
  \mathcal{D}[\phi],
\end{equation}
for any $x \in \R^d$, where $\delta_0$ is the Dirac distribution at $0$. In the sequel, it is more convenient to consider trajectories defined on $[t,0]$, $t<0$, rather than on $[0,t]$, $t>0$. Therefore we introduce the notation
\begin{equation}
  \bar{P}^\epsilon_t(x) = P^\epsilon_{-t}(x) = \int_{\phi_t=\bar{x}} \delta_0(\phi_0-x)
  \ee^{-\frac{1}{\epsilon}\mathcal{A}_{t,0}[\phi]}
  \mathcal{D}[\phi],
\end{equation}
for $t<0$. As a consequence, the stationary distribution $P^{\epsilon}$ writes
\begin{equation}
  P^{\epsilon}(x) = \lim_{t \to +\infty} P^\epsilon_t(x) = \lim_{t \to -\infty} \bar{P}^\epsilon_t(x).
\end{equation}

The second-order expansion of the action functional on $[t,0]$ around the fluctuation path defined by~\eqref{eq:varphi} writes
\begin{equation}
  \mathcal{A}_{t,0}[\phi] \simeq \mathcal{A}_{t,0}[\varphi^x] + \frac{\delta \mathcal{A}_{t,0}}{\delta \phi}[\varphi^x](\phi-\varphi^x) + \frac{1}{2}\frac{\delta^2 \mathcal{A}_{t,0}}{\delta \phi^2}[\varphi^x](\phi-\varphi^x, \phi-\varphi^x).
\end{equation}
In the $t \to -\infty$ limit, the first-order term vanishes because of the optimality condition on $\varphi^x$. The second-order term rewrites
\begin{equation}
  \begin{aligned}
    & \frac{1}{2}\frac{\delta^2 \mathcal{A}_{t,0}}{\delta \phi^2}[\varphi^x](\delta \phi, \delta \phi)\\
    & = \frac{1}{4} \int_{s=t}^0 \left(\|\delta\dot{\phi}_s - \nabla b(\varphi^x_s) \delta\phi_s\|^2 - \langle \dot{\varphi}^x_s - b(\varphi^x_s), \nabla^2 b(\varphi^x_s)(\delta\phi_s, \delta\phi_s)\rangle\right)\dd s\\
    & = \frac{1}{4} \int_{s=t}^0 \left(\|\delta\dot{\phi}_s + Q^x_s \delta\phi_s\|^2 + 2\langle \delta\phi_s, R^x_s\delta\phi_s\rangle\right)\dd s,
  \end{aligned}
\end{equation}
where the matrices $Q^x_s$ and $R^x_s$ are defined, for $s \leq 0$, by
\begin{equation}\label{eq:QR}
  Q^x_s = -\nabla b(\varphi^x_s), \qquad R^x_s = -\sum_{k=1}^d \partial_k V(\varphi^x_s) \nabla^2 b_k(\varphi^x_s).
\end{equation}
At this stage, let us anticipate on the numerical discussion of Section~\ref{s:num}, and point out that the identity~\eqref{eq:V-varphi-b} shows that the matrices $Q^x_s$ and $R^x_s$ can be computed from the mere knowledge of the vector field $b$ with its space derivatives, and the fluctuation path $\varphi^x$ together with its time derivative. In particular, it is not necessary to compute neither the quasipotential nor its space derivatives along the fluctuation path.

As a consequence of the second-order expansion of the action functional, the Laplace approximation yields, for $\epsilon > 0$ small but fixed, 
\begin{equation}\label{eq:Peps:1}
  \bar{P}^{\epsilon}_t(x) \simeq 
  \ee^{-\frac{1}{\epsilon}\mathcal{A}_{t,0}[\varphi^x]}
  \int_{\delta\phi_t=\bar{x}-\varphi^x_t} \delta_0(\delta\phi_0)
  \ee^{-\frac{1}{4\epsilon}\int_{s=t}^0 \left(\|\delta\dot{\phi}_s + Q^x_s \delta\phi_s\|^2 + 2\langle \delta\phi_s, R^x_s\delta\phi_s\rangle\right)\dd s}
  \mathcal{D}[\delta\phi].
\end{equation} 
By~\eqref{eq:VMAP}, in the $t \to -\infty$ limit, $\mathcal{A}_{t,0}[\varphi^x]$ converges to the quasipotential $V(x)$, while~\eqref{eq:varphi} shows that $\bar{x}-\varphi^x_t$ vanishes. Therefore the prefactor $C^{\epsilon}(x)$ defined by~\eqref{eq:Cepsilon} is equivalent, when $\epsilon \dto 0$, to the $t \to -\infty$ limit of the path integral
\begin{equation}\label{eq:C}
   \begin{aligned}
     & \int_{\delta\phi_t=0} \delta_0(\delta\phi_0)
     \ee^{-\frac{1}{4\epsilon}\int_{s=t}^0 \left(\|\delta\dot{\phi}_s + Q^x_s \delta\phi_s\|^2 + 2\langle \delta\phi_s, R^x_s\delta\phi_s\rangle\right)\dd s}
     \mathcal{D}[\delta\phi]\\
     & = \frac{1}{\epsilon^{d/2}} \int_{\delta\phi_t=0} \delta_0(\delta\phi_0)
     \ee^{-\frac{1}{4}\int_{s=t}^0 \left(\|\delta\dot{\phi}_s + Q^x_s \delta\phi_s\|^2 + 2\langle \delta\phi_s, R^x_s\delta\phi_s\rangle\right)\dd s}
     \mathcal{D}[\delta\phi]
   \end{aligned}
\end{equation}
after rescaling $\delta\phi$ by the factor $\sqrt{\epsilon}$ and using the fact that $\delta_0(\sqrt{\epsilon}y) = \epsilon^{-d/2}\delta_0(y)$. Since the term appearing in the time integral above is quadratic in $\delta\phi$, the path integral is an infinite-dimensional Gaussian integral, the computation of which usually reduces to the computation of a functional determinant. In the next paragraph, we adopt a different strategy to compute its value, based on the use of the Feynman--Kac formula.


\subsection{Feynman--Kac formula and backward matrix Riccati equation}\label{ss:FK}

Let us define, for all $t \leq 0$ and $x,y \in \R^d$,
\begin{equation}\label{eq:u}
  u(x;t,y) = \int_{\delta\phi_t=y} \delta_0(\delta\phi_0)
  \ee^{-\frac{1}{4}\int_{s=t}^0 \left(\|\delta\dot{\phi}_s + Q^x_s \delta\phi_s\|^2 + 2\langle \delta\phi_s, R^x_s\delta\phi_s\rangle\right)\dd s}
  \mathcal{D}[\delta\phi],
\end{equation}
where the path integral is taken over trajectories $\delta\phi$ on $[t,0]$ such that $\delta\phi_t=y$,
so that by~(\ref{eq:Peps:1}--\ref{eq:C}),
\begin{equation}\label{eq:Cu}
  C^{\epsilon}(x) \mysimeq_{\epsilon \dto 0} \frac{1}{\epsilon^{d/2}} \lim_{t \to -\infty} u(x;t,0).
\end{equation}
The introduction of this notation allows to provide a probabilistic interpretation to the right-hand side of~\eqref{eq:u}. Indeed, the path integral there writes as the expectation
\begin{equation}\label{eq:u-PI}
  u(x;t,y) = \Exp_{t,y}\left[\exp\left(-\frac{1}{2}\int_{s=t}^0 \langle Y^x_s, R^x_sY^x_s\rangle\dd s\right)\delta_0(Y^{\epsilon}_0)\right],
\end{equation}
where $(Y^x_s)_{s \leq 0}$ is the diffusion process defined by
\begin{equation}\label{eq:Y}
  \dd Y^x_s = - Q^x_s Y^x_s \dd s + \sqrt{2} \dd W_s, \qquad s \leq 0,
\end{equation}
and $\Exp_{t,y}[\cdot]$ refers to the expectation under which $Y^x_t=y$ while $\delta_0$ still denotes the Dirac distribution in $0$. The linear process $(Y^x_s)_{s \leq 0}$ describes the scaled Gaussian fluctuations of the original process around $(\varphi^x_s)_{s \leq 0}$. 

By the Feynman--Kac formula, the function $u(x;t,y)$ is the solution to the backward parabolic problem
\begin{equation}\label{eq:FK}
  \left\{\begin{aligned}
    -\partial_t u(x;t,y) & = \Delta_y u(x;t,y) - \langle Q^x_t y, \nabla_y u(x;t,y)\rangle - \frac{1}{2} \langle y, R^x_t y\rangle u(x;t,y), \qquad t < 0,\\
    u(x;0,y) & = \delta_0(y).
  \end{aligned}\right.
\end{equation}
Since the diffusion process $(Y^x_s)_{s \leq 0}$ describes Gaussian fluctuations around the fluctuation path, we shall look for a solution to~\eqref{eq:FK} with the Gaussian ansatz
\begin{equation}\label{eq:ansatz}
  u(x;t,y) = \frac{1}{\sqrt{(4\pi)^d \eta^x_t}} \exp\left(-\frac{\langle y, K^x_t y\rangle}{4}\right),
\end{equation}
where, for all $t<0$, $\eta^x_t>0$ and the matrix $K^x_t$ is symmetric. This is a natural ansatz since~\eqref{eq:u-PI} shows that $u(x;t,\cdot)$ is the marginal distribution at time $s=0$ of the Gaussian process $(Y^\epsilon_s)_{s \in [t,0]}$, under an (unnormalized) exponential tilting by a quadratic functional of this process. Since such changes of measure are known to preserve Gaussian distributions, the resulting density may be expected to remain Gaussian. Furthermore, the condition $u(x;0,y) = \delta_0(y)$ implies the limit behavior
\begin{equation}\label{eq:limit}
  \lim_{t \uto 0} (K^x_t)^{-1} = 0, \quad \lim_{t \uto 0} \eta^x_t = 0, \quad \lim_{t \uto 0} \eta^x_t \det K^x_t = 1.
\end{equation}

Injecting the ansatz~\eqref{eq:ansatz} into the problem~\eqref{eq:FK}, we get the system of ordinary differential equations
\begin{equation}\label{eq:Riccati}
  \forall t < 0, \qquad \left\{\begin{aligned}
    \dot{\eta}^x_t & = - \eta^x_t \tr K^x_t,\\
    \dot{K}^x_t & = (K^x_t)^2 + {Q^x_t}^{\top}K^x_t + K^x_t Q^x_t - 2R^x_t.
  \end{aligned}\right.
\end{equation}
The second equation is a backward matrix Riccati equation, it is solved in Appendix~\ref{app:riccati}. We obtain the explicit result
\begin{equation}\label{eq:solK}
  \forall t < 0, \qquad K^x_t = -2\nabla^2 V(\varphi^x_t) + (Z^x_t)^{-1},
\end{equation}
with
\begin{equation}
  \begin{aligned}
    & Z^x_t = \int_{s=t}^0 \OrdExp{\int_{r=s}^t A^x_r \dd r}\OrdExp{\int_{r=s}^t A^x_r \dd r}^{\top}\dd s,\\
    & A^x_r = \nabla^2 V(\varphi^x_r) + \nabla \ell(\varphi^x_r),
  \end{aligned}
\end{equation}
and $\ordexp{\cdot}$ denotes the time ordered exponential of matrices, the definition and a few properties of which are recalled in Appendix~\ref{app:OE}. From~\eqref{eq:solK} we deduce in Appendix~\ref{app:pfeta} that
\begin{equation}\label{eq:limeta}
  \lim_{t \to -\infty} \eta^x_t = \frac{1}{2^d \det(\nabla^2 V(\bar{x}))}\exp\left(2\int_{s=-\infty}^0 \dive \ell(\varphi^x_s)\dd s\right).
\end{equation} 
Combining this result with the identities~\eqref{eq:Cu} and~\eqref{eq:ansatz} allows to recover the formula~\eqref{eq:equivC}. As a conclusive statement, we write the asymptotic equivalence for the non-equilibrium steady state
\begin{equation}\label{eq:asympPeps}
  P^\epsilon(x) \mysimeq_{\epsilon \dto 0} \sqrt{\frac{\det \nabla^2 V(\bar{x})}{(2\pi\epsilon)^d}}\exp\left(-\frac{V(x)}{\epsilon}-\int_{s=-\infty}^0 \dive\ell(\varphi^x_s)\dd s\right).
\end{equation}

The computation of the solution~\eqref{eq:solK} to the system~\eqref{eq:Riccati}, together with the proof of the identity~\eqref{eq:limeta}, make our derivation of the final asymptotics~\eqref{eq:asympPeps} for the non-equilibrium steady state definitely different from the WKB approximation from Section~\ref{s:C}. This chain of arguments may be considered as the main conceptual result from this article.


\section{Application to mean exit times}\label{s:exit}

Let $D \subset \R^d$ be a domain containing an equilibrium point $\bar{x}$ of the deterministic system $\dot{x}=b(x)$. Let us define the \emph{exit time} from $D$ by
\begin{equation}
  \tau^\epsilon_D = \inf\{s \geq 0 : X^\epsilon_s \not\in D\}.
\end{equation}
Under suitable assumptions on $D$, the Freidlin--Wentzell theory asserts that 
\begin{equation}\label{eq:Arrh}
  \lim_{\epsilon \dto 0} \epsilon \log\Exp_{\bar{x}}[\tau^\epsilon_D] = \inf_{y \in \partial D} V(y),
\end{equation}
where the subscript $\bar{x}$ indicates that $X^\epsilon_0 = \bar{x}$, and $V$ still refers to the quasipotential with respect to $\bar{x}$ defined by~\eqref{eq:V}. 

When the deterministic system $\dot{x}=b(x)$ possesses several stable equilibrium points and $D$ denotes the basin of attraction of one of these points $\bar{x}$, then the diffusion process $(X^\epsilon_s)_{s \geq 0}$ is called \emph{metastable}, and the logarithmic equivalent~\eqref{eq:Arrh} is called the \emph{Arrhenius law}.

In any case, the associated \emph{prefactor} $L^\epsilon_D$ to the mean exit time from $D$ is defined by
\begin{equation}
  \Exp_{\bar{x}}[\tau^\epsilon_D] = L^\epsilon_D \exp\left(\frac{1}{\epsilon} \inf_{y \in \partial D} V(y)\right).
\end{equation}
In this section, we express this prefactor in terms of the function $C^\epsilon(x)$ computed in~\eqref{eq:equivC}.

\subsection{Domain with a noncharacteristic boundary}\label{ss:nc} In this subsection, we assume that $D$ is an open, smooth and connected subset of $\R^d$ satisfying the following conditions, where we denote by $n(y)$ the exterior normal vector at $y \in \partial D$.
\begin{itemize}
  \item[(B1)] The deterministic system $\dot{x}=b(x)$ possesses a unique equilibrium point $\bar{x}$ in $D$, which attracts all the trajectories started from $D$, and $\langle b(y), n(y) \rangle < 0$ for all $y \in \partial D$.
  \item[(B2)] The function $V$ is $C^1$ in $D$; for any $x \in \bar{D}$, the fluctuation path $\varphi^x_s$ goes to $\bar{x}$ when $s \to -\infty$; and $\langle \nabla V(y), n(y)\rangle > 0$ for all $y \in \partial D$.
\end{itemize}
Under these assumptions, we proceed as in Subsection~\ref{ss:FW} and define the vector field $\ell : \bar{D} \to \R^d$ by the identity $b=-\nabla V+\ell$. Then $\ell$ still satisfies the orthogonality relation~\eqref{eq:transv}. 
\begin{itemize}
  \item[(B3)] The minimum of $V$ over $\partial D$ is reached at a single point $y^*$, at which
  \begin{equation}
    \mu^* = \langle \nabla V(y^*) + \ell(y^*), n(y^*) \rangle > 0,
  \end{equation}
  and the quadratic form $h^*: \xi \mapsto \langle \xi, \nabla^2 V(y^*) \xi\rangle$ has positive eigenvalues on the hyperplane $n(y^*)^\perp = \{\xi \in \R^d: \langle \xi, n(y^*)\rangle = 0\}$.
\end{itemize}

By Assumption~(B1), \cite[Theorem~4.1, p.~106]{FreWen12} applies and yields~\eqref{eq:Arrh}. In~\cite[Eq.~(4.25)]{BouRey16}, the integral formula 
\begin{equation}
  \lambda^\epsilon_D = \int_{y \in \partial D} \langle \nabla V(y) + \ell(y), n(y)\rangle C^\epsilon(y) \exp\left(-\frac{V(y)}{\epsilon}\right)\dd y
\end{equation}
was derived for the \emph{exit rate} $\lambda^\epsilon_D = \Exp_{\bar{x}}[\tau^\epsilon_D]^{-1}$. Using the second-order expansion of $V$ in the neighborhood of $y^*$ in this formula, we obtain the equivalent
\begin{equation}
  \begin{aligned}
    L^\epsilon_D & \mysimeq_{\epsilon\dto 0} \frac{1}{C^\epsilon(y^*) \mu^*} \sqrt{\frac{\det h^*}{(2\pi\epsilon)^{d-1}}}\\
    & \mysimeq_{\epsilon\dto 0} \frac{1}{\mu^*}\sqrt{\frac{2\pi\epsilon \det h^*}{\det \nabla^2 V(\bar{x})}}\exp\left(\int_{s=-\infty}^0 \dive \ell(\varphi^{y^*}_s)\dd s\right)
  \end{aligned}
\end{equation}
for the prefactor to the mean exit time from $D$.

\subsection{The Eyring--Kramers formula for metastable states}\label{ss:ek} In this subsection, we assume that the deterministic system $\dot{x}=b(x)$ possesses two stable equilibrium points $\bar{x}_1$ and $\bar{x}_2$, whose basins of attractions are separated by a smooth hypersurface $S$. 

We call $D$ the basin of attraction of $\bar{x}_1$ and formulate the following set of assumptions.
\begin{itemize}
  \item[(C1)] All the trajectories of the deterministic system $\dot{x}=b(x)$ started on $S$ remain in $S$ and converge to a single equilibrium point $x^* \in S$; besides, the matrix $\nabla b(x^*)$ possesses $d-1$ eigenvalues with negative real part and a single positive eigenvalue $\lambda^*$.
  \item[(C2)] Denoting by $V$ the quasipotential with respect to $\bar{x}_1$ still defined by~\eqref{eq:V}, there exists a unique (up to time shift) trajectory $\rho = (\rho_t)_{t \in \R} \subset D$ such that
  \begin{equation}
    \lim_{t \to -\infty} \rho_t = \bar{x}_1, \qquad \lim_{t \to +\infty} \rho_t = x^*, \qquad \text{and} \quad V(x^*) = \mathcal{A}_{-\infty,+\infty}[\rho].
  \end{equation} 
  \item[(C3)] $V$ is smooth in the neighborhood of $(\rho_t)_{t \in \R}$, and the vector field $\ell$ defined by the identity $b=-\nabla V+\ell$ satisfies the orthogonality relation~\eqref{eq:transv}.
\end{itemize}
In this context, $V$ reaches its minimum on $S$ at the point $x^*$, and we refer to~\cite{Ber13} for a proof of the Arrhenius law~\eqref{eq:Arrh} based on the Freidlin--Wentzell theory. Furthermore, the path $\rho$ is called the \emph{instanton} and it satisfies
\begin{equation}
  \forall t \in \R, \qquad \dot{\rho}_t = \nabla V(\rho_t) + \ell(\rho_t).
\end{equation}
As a consequence, for any $t \in \R$, the fluctuation path $(\varphi^x_s)_{s \leq 0}$ joining $\bar{x}$ to $x=\rho_t$ coincides with the instanton, in the sense that 
\begin{equation}\label{eq:varphirho}
  \forall s \leq 0, \qquad \varphi^x_s = \rho_{s+t}.
\end{equation}

In order to describe the prefactor $L^\epsilon_D$ in this case, we formulate the following supplementary assumption.
\begin{itemize}
  \item[(C4)] The matrix $H^* = \lim_{t \to +\infty} \nabla^2 V(\rho_t)$ exists and has $d-1$ positive eigenvalues and $1$ negative eigenvalue.
\end{itemize}

Under Assumptions~(C1--4), a formula was obtained in~\cite[Eq.~(1.10)]{BouRey16} to describe the sharp asymptotics of the expected time taken by the process to reach the neighborhood of $\bar{x}_2$, which is the contents of the so-called \emph{Eyring--Kramers formula} in the context of reversible diffusion processes~\cite{BovEckGayKle04}. Large deviation theory shows that with overwhelming probability, the path taken by the process to reach $\bar{x}_2$ passes close to $x^*$. At this point, since $b(x^*)=0$, the process has a probability close to $1/2$ to turn back into $D$ and a probability close to $1/2$ to reach the neighbourood of $\bar{x}_2$. Therefore, the expected time taken by the process to exit $D$ is half the time described by the Eyring--Kramers formula. As a consequence, dividing the right-hand side of~\cite[Eq.~(1.10)]{BouRey16} by $2$, we get the estimate
\begin{equation}\label{eq:LDek}
  L^\epsilon_D \mysimeq_{\epsilon\dto 0} \frac{\pi}{\lambda^*}\sqrt{\frac{|\det H^*|}{\det \nabla^2 V(\bar{x}_1)}}\exp\left(\int_{t=-\infty}^{+\infty} \dive\ell(\rho_t)\dd t\right)
\end{equation}
for the prefactor to the mean exit time from $D$.

\begin{remark}
  Notice that in the case addressed in Subsection~\ref{ss:nc}, $L^\epsilon_D$ is proportional to $\sqrt{\epsilon}$, while in the present case, $L^\epsilon_D$ does not depend on $\epsilon$.
\end{remark}

\subsection{Comments on the nonlocal contribution} In the formula obtained for $L^\epsilon_D$ in the cases of both Subsections~\ref{ss:nc} and~\ref{ss:ek}, the \emph{nonlocal contribution} discussed at the end of Section~\ref{s:C} appears. In particular for the Eyring--Kramers formula, the identity~\eqref{eq:varphirho} allows to relate the integral term of~\eqref{eq:LDek} to fluctuation paths by the remark that
\begin{equation}
  \int_{t=-\infty}^{+\infty} \dive\ell(\rho_t)\dd t = \lim_{t \to +\infty} \int_{s=-\infty}^0 \dive\ell(\varphi^{\rho_t}_s)\dd s.
\end{equation}
As a consequence, in the sequel of this paper, we focus on the numerical evaluation of this term.

Recent rigorous derivations of the Eyring--Kramers formula for nonreversible diffusion processes have been obtained in~\cite{LanMarSeo19,LeeSeo}, but to our knowledge, they are restricted to the case of processes whose invariant distribution is given by the  measure~\eqref{eq:Gibbs} and therefore do not include the nonlocal contribution.


\section{Effective computation of the prefactor $C^\epsilon$}\label{s:num}

In Section~\ref{s:exit}, we showed that sharp asymptotics for mean exit times essentially follow from an accurate computation of the prefactor $C^\epsilon$ to the stationary distribution, which was the object of Sections~\ref{s:C} and~\ref{s:path}. In this section, we therefore come back to the framework of Assumptions~(A1--4) and focus on the numerical evaluation of this prefactor, whose equivalent is given by~\eqref{eq:equivC}. 

For a given $x \in \R^d$, the evaluation of the right-hand side of~\eqref{eq:equivC} requires the computation of the following quantities:
\begin{itemize}
  \item the fluctuation path $(\varphi^x_s)_{s \leq 0}$,
  \item the divergence of $\ell$ along the fluctuation path,
  \item the determinant of $\nabla^2 V(\bar{x})$.
\end{itemize}
The main difficulty to access these quantities is that in general, the transverse decomposition~\eqref{eq:decomp} of the vector field $b$ is not explicit, and therefore neither $\nabla V$ nor $\ell$ are straightforward to obtain. 

As a first step, one can remark that thanks to~\eqref{eq:decomp},
\begin{equation}\label{eq:divdelta}
  \forall s \leq 0, \qquad \dive \ell(\varphi^x_s) = \dive b(\varphi^x_s) + \Delta V(\varphi^x_s),
\end{equation}
so that computing the Hessian matrix $\nabla^2 V(\varphi^x_s)$ of the quasipotential along the fluctuation path is sufficient to obtain $\dive \ell(\varphi^x_s)$ and $\det \nabla^2 V(\bar{x})$. 

Motivated by the large deviation principle~\eqref{eq:FW}, specific methods have been developed in the computational physics community to evaluate the quasipotential $V$ (we provide more context in Remark~\ref{rk:schemes} below). The \emph{geometric Minimum Action Method} (gMAM)~\cite{VanHey08:JCP,HeyVan08:CPAM} is a numerical procedure which computes the fluctuation path $\varphi^x$ and returns the value of $V(x)$ given by~\eqref{eq:VMAP}. Once we are supplied with $\varphi^x$, we could virtually iterate the method described above in order to compute the values of $V$ in the neighborhood of the fluctuation path, and hence approximate $\nabla^2 V(\varphi^x_s)$ for selected points $s$ on a time grid. However, computing a fluctuation path for each evaluation of $V$ is too costly and we shall look for an algorithm that avoids doing so. Our method relies on the fact that $\nabla^2 V(\varphi_t^x)$ satisfies a forward matrix Riccati equation, which was already observed in~\cite{Lud75,MaiSte97}.

\begin{proposition}\label{prop:RiccH}
  Let $x \in \R^d$. Under Assumption~(A3), the family of matrices $(H^x_t)_{t \leq 0}$ defined by
  \begin{equation}
    H^x_t = \nabla^2 V(\varphi_t^x)
  \end{equation}
  satisfies the \emph{forward} matrix Riccati equation
  \begin{equation}\label{eq:RiccH}
    \dot{H}^x_t = -2(H^x_t)^2 + {Q_t^x}^{\top} H^x_t + H^x_tQ_t^x + R_t^x,
  \end{equation}
  complemented with the limit condition
  \begin{equation}\label{eq:RiccH:lim}
    \lim_{t \to -\infty} H^x_t = \nabla^2 V(\bar{x}).
  \end{equation}
\end{proposition}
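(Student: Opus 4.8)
The plan is to obtain \eqref{eq:RiccH} by differentiating $H^x_t=\nabla^2V(\varphi^x_t)$ along the fluctuation path and identifying the result with the claimed right-hand side by means of a pointwise matrix identity coming from the Hamilton--Jacobi equation. \emph{Step 1: a pointwise identity.} Differentiating \eqref{eq:HJ}, i.e.\ $\|\nabla V\|^2+\langle b,\nabla V\rangle=0$, twice in space yields, at every point of $\R^d$,
\[
  2(\nabla^2V)^2+\sum_{k=1}^d\left(2\partial_kV+b_k\right)\partial_k\nabla^2V+\nabla^2V\,\nabla b+\nabla b^{\top}\nabla^2V+\sum_{k=1}^d(\partial_kV)\,\nabla^2b_k=0.
\]
(Equivalently, this follows by combining the identity \eqref{eq:D2transv} --- itself the twice-differentiated transversality relation --- with the elementary consequences $\nabla b=-\nabla^2V+\nabla\ell$ and $\nabla^2b_k=-\partial_k\nabla^2V+\nabla^2\ell_k$ of the decomposition \eqref{eq:decomp}.) The one mildly technical point is the index bookkeeping: one has to be careful with transposes when passing between the derivatives $\partial_ib_a$ and the Jacobian $\nabla b$, and to use the symmetry of $\nabla^2V$ and of each $\nabla^2b_k$; this is where essentially all of the computation lies.

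\emph{Step 2: evaluation along $\varphi^x$.} By \eqref{eq:decomp} and \eqref{eq:varphi}, $2\nabla V(\varphi^x_t)+b(\varphi^x_t)=\nabla V(\varphi^x_t)+\ell(\varphi^x_t)=\dot\varphi^x_t$, so the directional-derivative term in the identity of Step~1 is exactly
\[
  \sum_{k=1}^d(\dot\varphi^x_t)_k\,\partial_k\nabla^2V(\varphi^x_t)=\frac{\dd}{\dd t}\nabla^2V(\varphi^x_t)=\dot H^x_t.
\]
Using $\nabla b(\varphi^x_t)=-Q^x_t$ and $\sum_{k=1}^d(\partial_kV)(\varphi^x_t)\,\nabla^2b_k(\varphi^x_t)=-R^x_t$ from \eqref{eq:QR}, the identity becomes $2(H^x_t)^2+\dot H^x_t-{Q^x_t}^{\top}H^x_t-H^x_tQ^x_t-R^x_t=0$, which is precisely \eqref{eq:RiccH}.

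\emph{Step 3: the limit condition.} By \eqref{eq:varphi}, $\varphi^x_t\to\bar x$ as $t\to-\infty$, and $\nabla^2V$ is continuous on $\R^d$ by Assumption~(A3); hence $H^x_t=\nabla^2V(\varphi^x_t)\to\nabla^2V(\bar x)$, which is \eqref{eq:RiccH:lim}. The one caveat I would flag is regularity: the very statement that $t\mapsto H^x_t$ is differentiable requires $\nabla^2V$ to be differentiable along the path, i.e.\ $V\in C^3$ in a neighbourhood of $(\varphi^x_s)_{s\le 0}$ --- a tacit assumption already underlying the derivation of \eqref{eq:D2transv}; under Assumption~(A3) alone the argument should be read as formal, or this extra smoothness simply added to the hypotheses. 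Beyond that, the proof is a direct computation, and the main obstacle is purely the matrix-index bookkeeping in Step~1.
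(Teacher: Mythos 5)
Your proof is correct and takes essentially the same route as the paper's: both hinge on the twice-differentiated Hamilton--Jacobi relation (your pointwise identity from Step~1 is exactly \eqref{eq:D2transv} rewritten in terms of $b$ rather than $\ell$, as you note) followed by evaluation along the fluctuation path using $\dot\varphi^x_t=\nabla V(\varphi^x_t)+\ell(\varphi^x_t)=2\nabla V(\varphi^x_t)+b(\varphi^x_t)$. The only difference is organizational --- the paper splits the chain-rule term into its $\nabla V$- and $\ell$-parts and invokes \eqref{eq:D2transv} for the latter, whereas you keep everything in terms of $b$ --- and your caveat about the tacit $C^3$ regularity of $V$ is well taken.
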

\begin{proof}
  By~\eqref{eq:varphi}, the time derivative of $H^x_t$ writes
  \begin{equation}
    \dot{H}^x_t = \sum_{k=1}^d (\partial_k V(\varphi^x_t) + \ell_k(\varphi^x_t))\partial_k \nabla^2 V(\varphi^x_t).
  \end{equation}
  On the one hand, using~\eqref{eq:decomp} and~\eqref{eq:QR} yields
  \begin{equation}
    \begin{aligned}
      \sum_{k=1}^d \partial_k V(\varphi^x_t)\partial_k \nabla^2 V(\varphi^x_t) & = \sum_{k=1}^d \partial_k V(\varphi^x_t)\nabla^2 (-b_k(\varphi^x_t)+\ell_k(\varphi^x_t))\\
      & = R_t^x + \sum_{k=1}^d \partial_k V(\varphi^x_t)\nabla^2 \ell_k(\varphi^x_t),
    \end{aligned}
  \end{equation}
  while on the other hand, the identity~\eqref{eq:D2transv} yields
  \begin{equation}
    \begin{aligned}
      & \sum_{k=1}^d \ell_k(\varphi^x_t)\partial_k \nabla^2 V(\varphi^x_t)\\
      & = -\nabla^2 V(\varphi^x_t) \nabla \ell(\varphi^x_t) - \nabla \ell(\varphi^x_t)^{\top} \nabla^2 V(\varphi^x_t) - \sum_{k=1}^d \partial_k V(\varphi^x_t)\nabla^2 \ell_k(\varphi^x_t),
    \end{aligned}
  \end{equation}
  so that
  \begin{equation}
    \dot{H}^x_t = -\nabla^2 V(\varphi^x_t) \nabla \ell(\varphi^x_t) - \nabla \ell(\varphi^x_t)^{\top} \nabla^2 V(\varphi^x_t) + R_t^x.
  \end{equation}
  We now compute
  \begin{equation}
    {Q_t^x}^{\top} H^x_t + H^x_tQ_t^x = 2(H^x_t)^2 - \nabla^2 V(\varphi^x_t) \nabla \ell(\varphi^x_t)  - \nabla \ell(\varphi^x_t)^{\top} \nabla^2 V(\varphi^x_t),
  \end{equation}
  which yields~\eqref{eq:RiccH} and completes the proof.
\end{proof}

\begin{remark}\label{rk:forwback}
  The backward matrix Riccati equation~\eqref{eq:Riccati} and the forward Riccati equation~\eqref{eq:RiccH} are related by the fact that if $H_t$ is a solution to~\eqref{eq:RiccH}, then $K_t = -2H_t$ is a solution to~\eqref{eq:Riccati}. This remark is employed in Appendix~\ref{app:riccati} to solve~\eqref{eq:Riccati} by quadrature.
\end{remark}

As was already noted in Section~\ref{ss:laplace}, the coefficients $Q^x_t$ and $R^x_t$ of the forward matrix Riccati equation~\eqref{eq:RiccH} can be computed from the mere knowledge of $b$ and $\varphi^x$, thanks to the identity~\eqref{eq:V-varphi-b}. As a consequence, $\nabla^2 V(\varphi^x_t)$ can be obtained by numerical integration of~\eqref{eq:RiccH}. Therefore we are left with two tasks: computing the limit condition $\nabla^2 V(\bar{x})$ for~\eqref{eq:RiccH}, and integrating this equation on $(-\infty,0]$. These tasks are discussed in the respective Sections~\ref{ss:barH} and~\ref{ss:integr}. The whole method is then summarized in Section~\ref{ss:sumup}.

\begin{remark}\label{rk:schemes}
  The numerical evaluation of large deviation quantities, such as quasipotentials or prefactors, is known to be a difficult question, even in low-dimensional cases. Indeed, the presence of a small term $\epsilon$ in the second-order part of the infinitesimal generator $\epsilon \Delta + \langle b, \nabla\rangle$ of~\eqref{eq:SDE} may make standard discretization schemes for Fokker--Planck or Kolmogorov equations ill-conditioned and unstable. Therefore, dedicated methods need to be developed. Roughly speaking, two classes of such methods exist: \emph{path-based} methods, such as the gMAM employed in this section, which rely on the computation of minimum action paths, and \emph{mesh-based} solvers which compute the value of the quasipotential $V$ on a predetermined grid. We refer for instance to the recent work by Paskal and Cameron~\cite{PasCam} for an example of the latter class of methods, which has the advantage to also provide an approximation of $\nabla V$ on the grid, but crucially suffers from the curse of dimensionality.
\end{remark}


\subsection{Determination of the limit condition}\label{ss:barH} The $t \to -\infty$ limit condition for $H^x_t$ is given by 
\begin{equation}
  \bar{H} = \nabla^2 V(\bar{x}),
\end{equation}
which we assume to be positive-definite. This matrix satisfies the stationary version of~\eqref{eq:RiccH}, which writes
\begin{equation}\label{eq:RiccHstat}
  2\bar{H}^2 = \bar{Q}^\top \bar{H} + \bar{H}\bar{Q},
\end{equation}
with $\bar{Q} = -\nabla b(\bar{x})$. In the context of optimal control, this equation is referred to as a \emph{continuous time algebraic Riccati equation (CARE)}. Alternatively, since $\bar{x}$ is a stable equilibrium point of the dynamical system $\dot{x}=b(x)$, the eigenvalues of $\bar{Q}$ have nonnegative real parts. Let us assume that these eigenvalues have positive real parts. Then $\bar{H}^{-1}$ solves the continuous Lyapunov equation
\begin{equation}\label{eq:bHLyap}
  2I_d = \bar{H}^{-1}\bar{Q}^\top + \bar{Q}\bar{H}^{-1},
\end{equation}
so that
\begin{equation}\label{eq:bHm1}
  \bar{H}^{-1} = 2 \int_{t=0}^{+\infty} \exp(-t\bar{Q})\exp(-t\bar{Q}^\top)\dd t.
\end{equation}

\subsection{Reparametrization and integration of the system}\label{ss:integr}

The system~(\ref{eq:RiccH}-\ref{eq:RiccH:lim}) is defined on the time interval $(-\infty,0]$, with a limit condition in $t=-\infty$. To facilitate its numerical integration, we first introduce a reparametrization of time by the length of the fluctuation path, in the spirit of gMAM~\cite{HeyVan08:CPAM,VanHey08:JCP}.

The length of the fluctuation path is defined by
\begin{equation}
  L^x = \int_{s=-\infty}^0 \|\dot{\varphi}^x_s\| \dd s = \int_{s=-\infty}^0 \|b(\varphi^x_s)\| \dd s,
\end{equation}
where the second identity follows from the orthogonality relation~\eqref{eq:transv}. For all $t \leq 0$, we denote
\begin{equation}
  \sigma_t = \int_{s=-\infty}^t \|\dot{\varphi}^x_s\| \dd s = \int_{s=-\infty}^t \|b(\varphi^x_s)\| \dd s \in (0,L^x].
\end{equation}
The reparametrized fluctuation path is the trajectory $(\tilde{\varphi}^x_{\sigma})_{\sigma \in [0,L^x]}$ defined by the identity
\begin{equation}
  \forall t \leq 0, \qquad \varphi^x_t = \tilde{\varphi}^x_{\sigma_t},
\end{equation}
and the continuous extension $\tilde{\varphi}^x_0 = \bar{x}$. It is easily observed that, for all $\sigma \in (0,L^x]$,
\begin{equation}\label{eq:ddsigma}
  \frac{\dd}{\dd\sigma} \tilde{\varphi}^x_{\sigma} = \frac{\nabla V(\tilde{\varphi}^x_{\sigma}) + \ell(\tilde{\varphi}^x_{\sigma})}{\|b(\tilde{\varphi}^x_{\sigma})\|}.
\end{equation}

We now define 
\begin{equation}
  \tilde{H}^x_\sigma = \nabla^2 V(\tilde{\varphi}^x_{\sigma}),
\end{equation}
so that $\tilde{H}^x_{\sigma_t} = H^x_t$ for all $t \leq 0$. Then the family $(\tilde{H}^x_\sigma)_{\sigma \in [0,L^x]}$ satisfies the problem
\begin{equation}\label{eq:rH}
  \left\{\begin{aligned}
    \frac{\dd}{\dd \sigma} \tilde{H}^x_\sigma &= \|b(\tilde{\varphi}^x_{\sigma})\|^{-1}\left(-2(\tilde{H}^x_\sigma)^2 + (\tilde{Q}_\sigma^x)^\top \tilde{H}^x_\sigma + \tilde{H}^x_\sigma \tilde{Q}_\sigma^x + \tilde{R}_\sigma^x\right), \qquad \sigma \in (0,L^x],\\
    \tilde{H}^x_0 &= \bar{H},
  \end{aligned}\right.
\end{equation}
with
\begin{equation}
  \tilde{Q}_{\sigma_t}^x = Q^x_t, \qquad \tilde{R}_{\sigma_t}^x = R^x_t.
\end{equation}

Notice that when $\sigma \dto 0$, both $\|b(\tilde{\varphi}^x_{\sigma})\|$ and $-2(\tilde{H}^x_\sigma)^2 + (\tilde{Q}_\sigma^x)^\top \tilde{H}^x_\sigma + \tilde{H}^x_\sigma \tilde{Q}_\sigma^x + \tilde{R}_\sigma^x$ vanish. Therefore in order to integrate this system starting from $\sigma=0$, we have to provide an \emph{a priori} estimate of the $\sigma \dto 0$ limit of
\begin{equation}\label{eq:ddsigmH}
  \frac{\dd}{\dd \sigma} \tilde{H}^x_\sigma = \sum_{l=1}^d \frac{\dd}{\dd \sigma} \tilde{\varphi}^x_{l,\sigma} \partial_l \nabla^2 V(\tilde{\varphi}^x_{\sigma}).
\end{equation}

First, a first-order expansion in~\eqref{eq:ddsigma} yields, in the $\sigma \dto 0$ regime,
\begin{equation}\label{eq:ddsigmat}
  \frac{\dd}{\dd \sigma} \tilde{\varphi}^x_\sigma \simeq \frac{\left(\nabla^2 V(\bar{x}) + \nabla \ell(\bar{x})\right)\left(\tilde{\varphi}^x_\sigma-\bar{x}\right)}{\left\|\nabla b(\bar{x})\left(\tilde{\varphi}^x_\sigma-\bar{x}\right)\right\|} = \frac{\left(2\bar{H} + \nabla b(\bar{x})\right)\left(\tilde{\varphi}^x_\sigma-\bar{x}\right)}{\left\|\nabla b(\bar{x})\left(\tilde{\varphi}^x_\sigma-\bar{x}\right)\right\|}
\end{equation}
thanks to~\eqref{eq:decomp}. Each individual term in the right-hand side is computable, so that one can evaluate the quantities ${\frac{\dd}{\dd \sigma} \tilde{\varphi}^x_{l,\sigma}}_{|\sigma=0}$ for all $l \in \{1, \ldots, d\}$.

It remains to compute the third derivatives of $V$ at $\bar{x}$ in order to evaluate the matrices $\partial_l \nabla^2 V(\bar{x})$, $l \in \{1, \ldots, d\}$. To this aim, we take the derivative of~\eqref{eq:D2transv} with respect to the $l$-th coordinate, and evaluate the result at $\bar{x}$. Using the fact that $\ell(\bar{x})$ and $\nabla V(\bar{x})$ vanish, we get the matrix identity
\begin{equation}
  \begin{aligned}
    & 0 = \sum_{k=1}^d \left(\partial_k \nabla^2 V(\bar{x})\right) \left(\partial_l \ell_k(\bar{x})\right) + \left(\partial_l \nabla^2 V(\bar{x})\right)\left(\nabla \ell(\bar{x})\right) + \left(\nabla^2 V(\bar{x})\right) \left(\partial_l \nabla \ell(\bar{x})\right)\\
    & + \left(\partial_l \nabla \ell(\bar{x})\right)^\top \left(\nabla^2 V(\bar{x})\right) + \left(\nabla \ell(\bar{x})\right)^\top \left(\partial_l \nabla^2 V(\bar{x})\right) + \sum_{k=1}^d \left(\partial_{kl}V(\bar{x})\right)\left(\nabla^2 \ell_k(\bar{x})\right).
  \end{aligned}
\end{equation}
Substituting the derivatives of $\ell$ with those of $b+\nabla V$, and introducing the notations
\begin{equation}
  v_{ijl} = \partial_{ijl} V(\bar{x}), \quad h_{ij} = \partial_{ij} V(\bar{x}), \quad \beta_{i,j} = \partial_j b_i(\bar{x}), \quad \gamma_{i,jl} = \partial_{jl} b_i(\bar{x}),
\end{equation}
we finally obtain that, for all $i,j,l \in \{1, \ldots, d\}$,
\begin{equation}\label{eq:vijk}
  \begin{aligned}
    & 0 = \sum_{k=1}^d v_{ijk}(\beta_{k,l}+h_{kl}) + v_{ikl}(\beta_{k,j}+h_{jk}) + h_{ik}(\gamma_{k,jl}+v_{jkl})\\
    & \qquad + (\gamma_{k,il}+v_{ikl})h_{jk} + (\beta_{k,i}+h_{ik})v_{jkl} + h_{kl}(\gamma_{k,ij}+v_{ijk}).
  \end{aligned}
\end{equation}
Since the coefficients $h_{ij}$, $\beta_{i,j}$ and $\gamma_{i,jl}$ are known, the system of equations above induces $d^3$ linear relations between the $d^3$ unknown coefficients $v_{ijl}$ --- more precisely, since both the left-hand side of~\eqref{eq:vijk} and the value of $v_{ijl}$ are invariant by permutation of the indices $i$, $j$ and $l$, the number of independent linear relations and unknown coefficients is reduced to $d(d+1)(d+2)/6$. The resolution of this system allows to reconstruct the matrices $\partial_l \nabla^2 V(\bar{x})$ and completes the computation of~\eqref{eq:ddsigmH}. Notice that these matrices also possess an explicit formulation as an integral along the fluctuation path associated with the linearized stochastic differential equation
\begin{equation}
  \dd \tilde{X}^{\epsilon}_s = \nabla b(\bar{x})\left(\tilde{X}^{\epsilon}_s-\bar{x}\right)\dd s + \sqrt{2\epsilon} \dd W_s, \qquad s \geq 0,
\end{equation}
see~\cite[Section~3.3 and Equation~(3.39)]{BouNarGaw16}.


\subsection{Conclusion}\label{ss:sumup}

Given $x \in \R^d$ and $\epsilon > 0$, the numerical procedure sketched above to compute the right-hand side of~\eqref{eq:equivC} can be summarized in the following steps.
\begin{enumerate}
  \item Compute the fluctuation path $\varphi^x$, for example using gMAM~\cite{HeyVan08:CPAM,VanHey08:JCP}. From this step, the value of $V(x)$ along the fluctuation path can be deduced thanks to~\eqref{eq:VMAP}, which allows to evaluate the right-hand side of~\eqref{eq:FW}.
  \item Solve the stationary matrix Riccati equation~\eqref{eq:RiccHstat} to get $\bar{H} = \nabla^2 V(\bar{x})$. This can be done either:
  \begin{itemize}
    \item by computing the integral~\eqref{eq:bHm1}, which solves the Lyapunov equation~\eqref{eq:bHLyap}, and inverting the result;
    \item or by using a numerical solver for the CARE~\eqref{eq:RiccHstat} directly.
  \end{itemize}
  \item Compute $L^x$ and for a given number of time steps $N \gg 1$, compute times $0 = t_N > t_{N-1} > \cdots > t_1 > t_0 = -\infty$ such that
  \begin{equation}
    \forall n \in \{1, \ldots, N-1\}, \qquad \int_{s=t_n}^{t_{n+1}} \|b(\varphi^x_s)\|\dd s = \theta,
  \end{equation}
  with $\theta = L^x/N$.
  \item Using~\eqref{eq:ddsigmat}, compute 
  \begin{equation}
    {\frac{\dd}{\dd\sigma} \tilde{\varphi}^x_\sigma}_{|\sigma=0} \simeq  \frac{\left(2\bar{H} + \nabla b(\bar{x})\right)\left(\varphi^x_{t_1}-\bar{x}\right)}{\left\|\nabla b(\bar{x})\left(\varphi^x_{t_1}-\bar{x}\right)\right\|},
  \end{equation}
  and solve~\eqref{eq:vijk} to get the matrices $\partial_l \nabla^2 V(\bar{x})$, $l \in \{1, \ldots, d\}$.
  \item Compute the approximation $\tilde{H}^{[n]}$ of $\tilde{H}^x_{n\theta} = H_{t_n}$ by integrating the matrix-valued differential equation
  \begin{equation}\label{eq:odetointegrate}
    \frac{\dd}{\dd \sigma} \tilde{H}^x_\sigma = \frac{-2(\tilde{H}^x_\sigma)^2 + (\tilde{Q}_\sigma^x)^\top \tilde{H}^x_\sigma + \tilde{H}^x_\sigma \tilde{Q}_\sigma^x + \tilde{R}_\sigma^x}{\|b(\tilde{\varphi}^x_{\sigma})\|}
  \end{equation}
  on the grid $\sigma \in \{\theta, 2\theta, \ldots, N\theta\}$, with initial conditions
  \begin{equation}
    \left\{\begin{aligned}
      & \tilde{H}^{[0]} = \bar{H},\\
      & \tilde{H}^{[1]} = \tilde{H}^{[0]} + \theta \sum_{l=1}^d {\frac{\dd}{\dd\sigma} \tilde{\varphi}^x_{l,\sigma}}_{|\sigma=0}\partial_l \nabla^2 V(\bar{x}).
    \end{aligned}\right.
  \end{equation}
  Many schemes can be employed for this numerical integration; in the example of Section~\ref{s:na}, we use a first-order implicit Euler scheme, which is observed to have satisfying stability properties.
  \item Deduce from~\eqref{eq:divdelta} that
  \begin{equation}
    \int_{s=-\infty}^0 \dive \ell(\varphi^x_s)\dd s \simeq \sum_{n=0}^{N-1} (t_{n+1}-t_n)\left(\dive b(\varphi^x_{t_n}) + \tr \tilde{H}^{[n]}\right),
  \end{equation}
  where we recall that the times $t_0, \ldots, t_N$ are chosen so that the length of the instanton on each time interval $(t_n,t_{n+1})$ be equal to $\theta$.
\end{enumerate}


\section{Numerical illustration}\label{s:na}

In this section, we apply the method devised in Section~\ref{s:num} to a two-dimensional process which exhibits \emph{bistability}, in the sense that the associated vector field $b : \R^2 \to \R^2$ possesses two stable equilibrium points. We are therefore in the context of Subsection~\ref{ss:ek} and our purpose is to numerically approximate the various quantities appearing in the right-hand side of~\eqref{eq:LDek}.

We shall fix one equilibrium point $\bar{x}$ and first compute the instanton $(\rho_t)_{t \in \R}$. By~\eqref{eq:varphirho}, the integral term involved in the computation of the prefactor $C^\epsilon(x)$ to the stationary distribution at the point $x$ writes
\begin{equation}\label{eq:J-instanton}
  \int_{s=-\infty}^0 \dive \ell(\varphi^x_s)\dd s = \int_{s=-\infty}^t \dive \ell(\rho_s)\dd s.
\end{equation}
For notational convenience, we denote this quantity by $J(t)$. Hence, computing $J(t)$ for any $t \in \R$ amounts to computing a whole family of prefactors $C^\epsilon$, at points $x=\rho_t$. In the $t \to +\infty$ limit, we shall finally obtain the value of the prefactor $L^\epsilon_D$ to the mean exit time from the basin of attraction of $\bar{x}_1$, as is described in Subsection~\ref{ss:ek}.

In the present section, we follow the steps of the procedure detailed in Subsection~\ref{ss:sumup}. Step~1, which corresponds to the computation of the fluctuation path, is addressed in Subsection~\ref{ss:6.1}, where we first present the example. The computation of $\bar{H}$ (Step~2) is performed in Subsection~\ref{ss:6.2}. Anticipating on Step~4, we compute the third derivatives of $V$ at $\bar{x}$ in Subsection~\ref{ss:6.3}. Steps~3, 4 and~5 are then addressed in Subsection~\ref{ss:6.4}, which yields $\tilde{H}^{[n]}$. Since we are considering fluctuation paths which finally reach the saddle-point (that is to say, the instanton), the reparametrization by the arclength of the instanton display a singularity when approaching this point, so that the computation of $\tilde{H}^{[n]}$ becomes unstable. This point is treated in Subsection~\ref{ss:6.5}. Last, the overall value of $J(t)$ is computed in Subsection~\ref{ss:6.6}, which corresponds to Step~6 of the procedure.

\subsection{Presentation of the example}\label{ss:6.1}

For $\alpha > 0$, we consider the potential function on $\R^2$ defined by
\begin{equation}
  V(x_1,x_2) = v_1(x_1) + v_2(x_2), \qquad v_1(x_1) = \frac{x_1^4}{4} - \frac{x_1^2}{2}, \quad v_2(x_2) = \alpha \frac{x_2^2}{2},
\end{equation}
the critical points of which are $(-1,0)$, $(0,0)$ and $(1,0)$. It is easily checked that the first and third points are stable equilibria of the dynamical system $\dot{x}=-\nabla V(x)$, while the second point is stable in the direction of $x_2$ but unstable in the direction of $x_1$.

For a smooth scalar field $c : \R^2 \to \R$ to be chosen below, let us define the vector field $b$ by
\begin{equation}
  b = -\nabla V + \ell, \qquad \ell(x_1,x_2) = c(x_1,x_2)\left(\begin{matrix}
    -v'_2(x_2)\\
    v'_1(x_1)
  \end{matrix}\right).
\end{equation}
For any choice of $c$, the Hamilton--Jacobi equation~\eqref{eq:HJ} is satisfied, the vector field $b$ vanishes at the same points as $-\nabla V$, and the points $(-1,0)$ and $(1,0)$ are stable equilibria of the dynamical system $\dot{x}=b(x)$. The \emph{saddle-point} $(0,0)$ is stable in one direction and unstable in one direction, but these directions may differ from the canonical vectors of $\R^2$.

Let us denote $\bar{x} = (-1,0)$ and $x^* = (0,0)$. The \emph{instanton} $(\rho_t)_{t \in \R}$ is the heteroclinic orbit of the dynamical system
\begin{equation}
  \dot{x} = \nabla V(x) + \ell(x)
\end{equation}
joining $\bar{x}$ in $t=-\infty$ to $x^*$ in $t=+\infty$. The instanton, the level lines of $V$ and the field lines of $b$ are plotted on Figure~\ref{fig:lines} for the choice 
\begin{equation}\label{eq:cexpl}
  c(x_1,x_2) = \beta x_1, \qquad \beta > 0.
\end{equation}
The numerical illustrations of this section are plotted for $\alpha=0.5$ and $\beta=3$.

\begin{figure}[htbp]
  \centering
  \includegraphics[width=\textwidth]{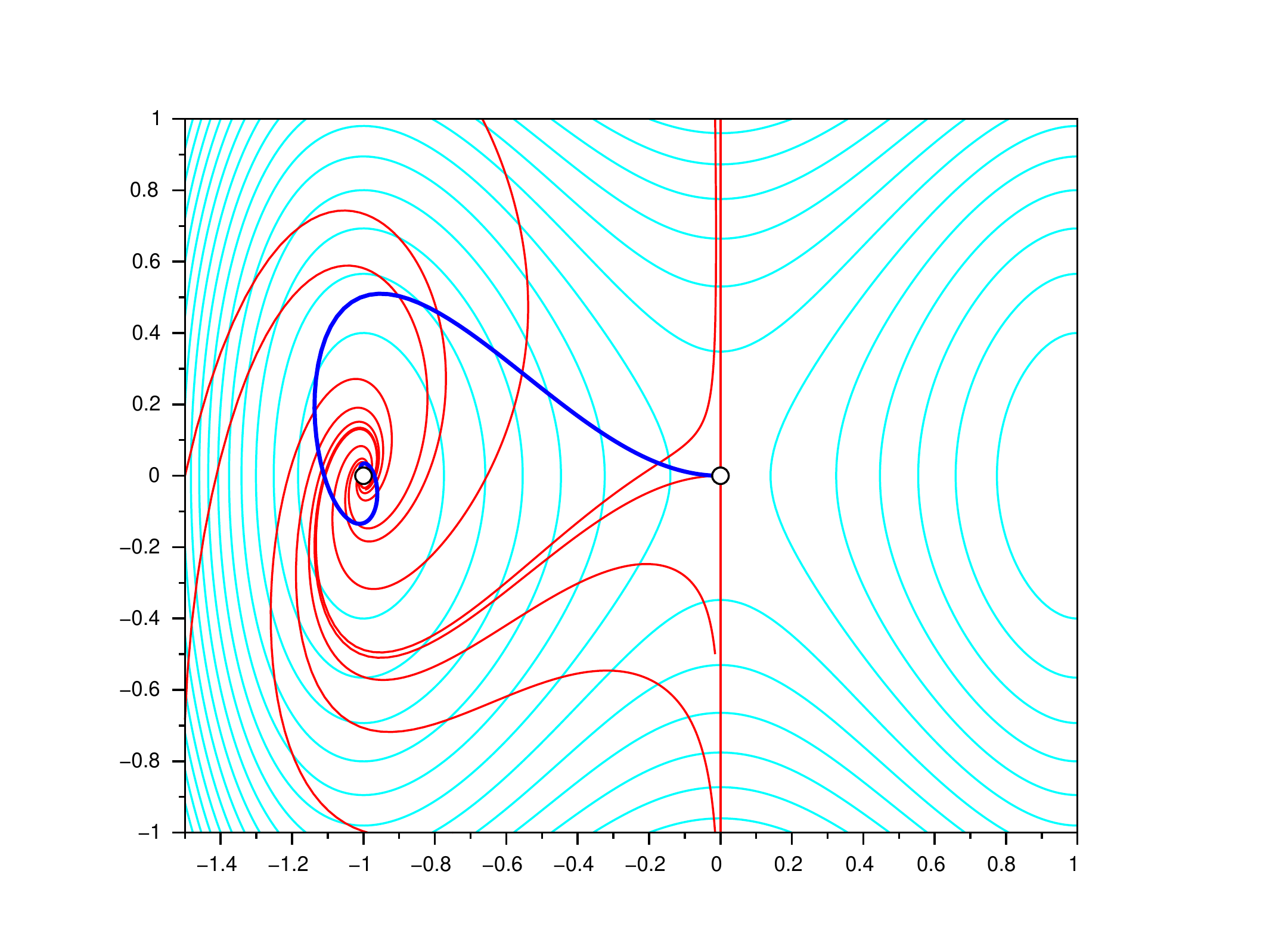}
  \caption{The level lines of $V$ (pale blue), some field lines associated with $b$ (red) and the instanton (thick blue) joining $\bar{x}=(-1,0)$ to $x^*=(0,0)$.}
  \label{fig:lines}
\end{figure}

\subsection{Stationary matrix Riccati equation}\label{ss:6.2} With our definition of the vector field $b$ and the choice~\eqref{eq:cexpl} for the scalar field $c$, the matrix $\bar{Q} = -\nabla b(\bar{x})$ appearing in~\eqref{eq:RiccHstat} writes
\begin{equation}
  \bar{Q} = \left(\begin{matrix}
    v''_1(\bar{x}_1) & c(\bar{x})v''_2(\bar{x}_2)\\
    -c(\bar{x})v''_1(\bar{x}_1) & v''_2(\bar{x}_2)
    \end{matrix}\right) = \left(\begin{matrix}
    2 & -\beta\alpha\\
    2\beta & \alpha
    \end{matrix}\right),
\end{equation}
where we have used the fact that $\nabla V(\bar{x})=0$ for the first identity. The numerical resolution of~\eqref{eq:RiccHstat} yields the expected result
\begin{equation}
  \bar{H} = \left(\begin{matrix}
    2 & 0\\
    0 & \alpha
    \end{matrix}\right).
\end{equation}

\subsection{Third derivatives of $V$ at $\bar{x}$}\label{ss:6.3}

In order to determine the initial condition for the computation of $\tilde{H}^{[n]}$, we write the coefficients appearing in~\eqref{eq:vijk}:
\begin{itemize}
  \item the computation of $\bar{H}$ performed above yields 
  \begin{equation}
    h_{11}=2, \quad h_{12}=0, \quad h_{22}=\alpha;
  \end{equation}
  \item the computation of $\nabla b(\bar{x})$ yields 
  \begin{equation}
    \beta_{1,1}=-2, \quad \beta_{1,2}=\beta\alpha, \quad \beta_{2,1}=-\beta, \quad \beta_{2,2}=-\alpha;
  \end{equation}
  \item the computation of the second derivatives of $b$ yields
  \begin{equation}
    \left\{\begin{aligned}
      &\gamma_{1,11} = 6, \quad \gamma_{1,12} = -\beta\alpha, \quad \gamma_{1,22} = 0,\\
      &\gamma_{2,11} = 10 \beta, \quad \gamma_{2,12} = 0, \quad \gamma_{2,22} = 0.
    \end{aligned}\right.
  \end{equation}
\end{itemize}
The system of linear equations~\eqref{eq:vijk} contains $4$ equations, corresponding to the choices $\{i,j,l\} = \{1,1,1\}, \{1,1,2\}, \{1,2,2\}, \{2,2,2\}$, which write
\begin{equation}
  \left\{\begin{aligned}
    0 &= 36 + 6v_{111} -3\beta v_{112},\\
    0 &= 6\beta\alpha +\beta\alpha v_{111} + (4+\alpha)v_{112} - 2\beta v_{122},\\
    0 &= 2\beta\alpha v_{112} + 2(1+\alpha)v_{122} - \beta v_{222},\\
    0 &= 3\beta\alpha v_{122} + 3\alpha v_{222}.
  \end{aligned}\right. 
\end{equation}
The unique solution of this system is
\begin{equation}
  v_{111} = -6, \quad v_{112} = v_{122} = v_{222} = 0,
\end{equation}
therefore we recover
\begin{equation}
  \partial_1 \nabla^2 V(\bar{x}) = \left(\begin{matrix}
    -6 & 0\\
    0 & 0
  \end{matrix}\right), \qquad \partial_2 \nabla^2 V(\bar{x}) = \left(\begin{matrix}
    0 & 0\\
    0 & 0
  \end{matrix}\right),
\end{equation}
as expected from the analytical expression of $V$.

\subsection{Computation of $\tilde{H}^{[n]}$}\label{ss:6.4}

The instanton on Figure~\ref{fig:lines} is computed for times $t$ ranging from $t_\mathrm{min}$ such that $\|\rho_{t_\mathrm{min}}-\bar{x}\| \ll 1$ to $t_\mathrm{max}$ such that $\|\rho_{t_\mathrm{max}}-x^*\| \ll 1$. The length of the instanton is thus
\begin{equation}
  L = \int_{t=-\infty}^{+\infty} \|b(\rho_t)\| \dd t \simeq \int_{t=t_\mathrm{min}}^{t_\mathrm{max}} \|b(\rho_t)\| \dd t.
\end{equation}
For the example which we are studying, $L \simeq 2.15$.

The parametrization of the instanton by its length is defined by
\begin{equation}
  \sigma_t = \int_{s=-\infty}^t \|b(\rho_s)\|\dd s \simeq \int_{s=t_\mathrm{min}}^t \|b(\rho_s)\|\dd s,
\end{equation}
and we denote
\begin{equation}
  \tilde{\rho}_{\sigma_t} = \rho_t, \qquad \tilde{Q}_\sigma = -\nabla b(\tilde{\rho}_\sigma), \qquad \tilde{R}_\sigma = -\sum_{k=1}^d \partial_k V(\tilde{\rho}_\sigma) \nabla^2 b(\tilde{\rho}_\sigma).
\end{equation}

In order to integrate the differential equation
\begin{equation}\label{eq:MREsigma}
  \frac{\dd}{\dd\sigma} \tilde{H}_\sigma = \frac{-2(\tilde{H}_\sigma)^2 + (\tilde{Q}_\sigma)^\top \tilde{H}_\sigma + \tilde{H}_\sigma \tilde{Q}_\sigma + \tilde{R}_\sigma}{\|b(\tilde{\rho}_\sigma)\|},
\end{equation}
we use standard, first-order Euler schemes. The explicit scheme
\begin{equation}
  \frac{\tilde{H}^{[n+1]}-\tilde{H}^{[n]}}{\theta} = \frac{-2(\tilde{H}^{[n]})^2 + (\tilde{Q}_{n\theta})^\top \tilde{H}^{[n]} + \tilde{H}^{[n]} \tilde{Q}_{n\theta} + \tilde{R}_{n\theta}}{\|b(\tilde{\rho}_{n\theta})\|}
\end{equation}
is observed to be unstable. A semi-implicit scheme is presented in~\cite{DubSai00}; in the present case, it is also observed to be unstable. Following ideas introduced in~\cite{DieEir94,DieEir96}, we finally consider the implicit scheme
\begin{equation}
  \frac{\tilde{H}^{[n+1]}-\tilde{H}^{[n]}}{\theta} = \frac{-2(\tilde{H}^{[n+1]})^2 + (\tilde{Q}_{n\theta})^\top \tilde{H}^{[n+1]} + \tilde{H}^{[n+1]} \tilde{Q}_{n\theta} + \tilde{R}_{n\theta}}{\|b(\tilde{\rho}_{n\theta})\|}
\end{equation}
which requires to solve the CARE 
\begin{equation}\label{eq:CARE-implicit}
  2\tilde{\theta}_n(\tilde{H}^{[n+1]})^2 + \left(\frac{I_2}{2}-\tilde{\theta}_n\tilde{Q}_{n\theta}\right)^\top \tilde{H}^{[n+1]} + \tilde{H}^{[n+1]}\left(\frac{I_2}{2}-\tilde{\theta}_n\tilde{Q}_{n\theta}\right) = \tilde{H}^{[n]} + \tilde{\theta}_n\tilde{R}_{n\theta},
\end{equation}
with $\tilde{\theta}_n = \theta/\|b(\tilde{\rho}_{n\theta})\|$, at each step. This scheme is observed to be stable and convergent.

We point out the fact that, to our knowledge, the theoretical results regarding the numerical analysis of the matrix Riccati equation~\eqref{eq:MREsigma}, such as~\cite{DieEir94,DieEir96,DubSai00}, assume that the matrix $\tilde{R}_\sigma$ remains nonnegative, which then ensures the nonnegativity of the solution $\tilde{H}_\sigma$. In our situation, the matrix $\tilde{H}_\sigma$ is clearly \emph{not} nonnegative in general, except at the initial point $\sigma=0$. Therefore, the result of our numerical simulations are purely empirical, and are not backed up by some rigorous stability or convergence result.

Figure~\ref{fig:HessQP} represents the value of the coefficient $\tilde{H}^{[n]}_{11}$ for $n=0, \ldots, N$, for several choices of the mesh size $\theta$. The actual values of the first coefficient of $\nabla^2 V(\rho_{t_n})$, computed from the analytical expression of $V$, are provided as a benchmark. We observe that the convergence is slow in the neighborhood of the saddle-point $x^*$, which is discussed in the next subsection.

\begin{figure}[htbp]
  \centering
  \includegraphics[width=.9\textwidth]{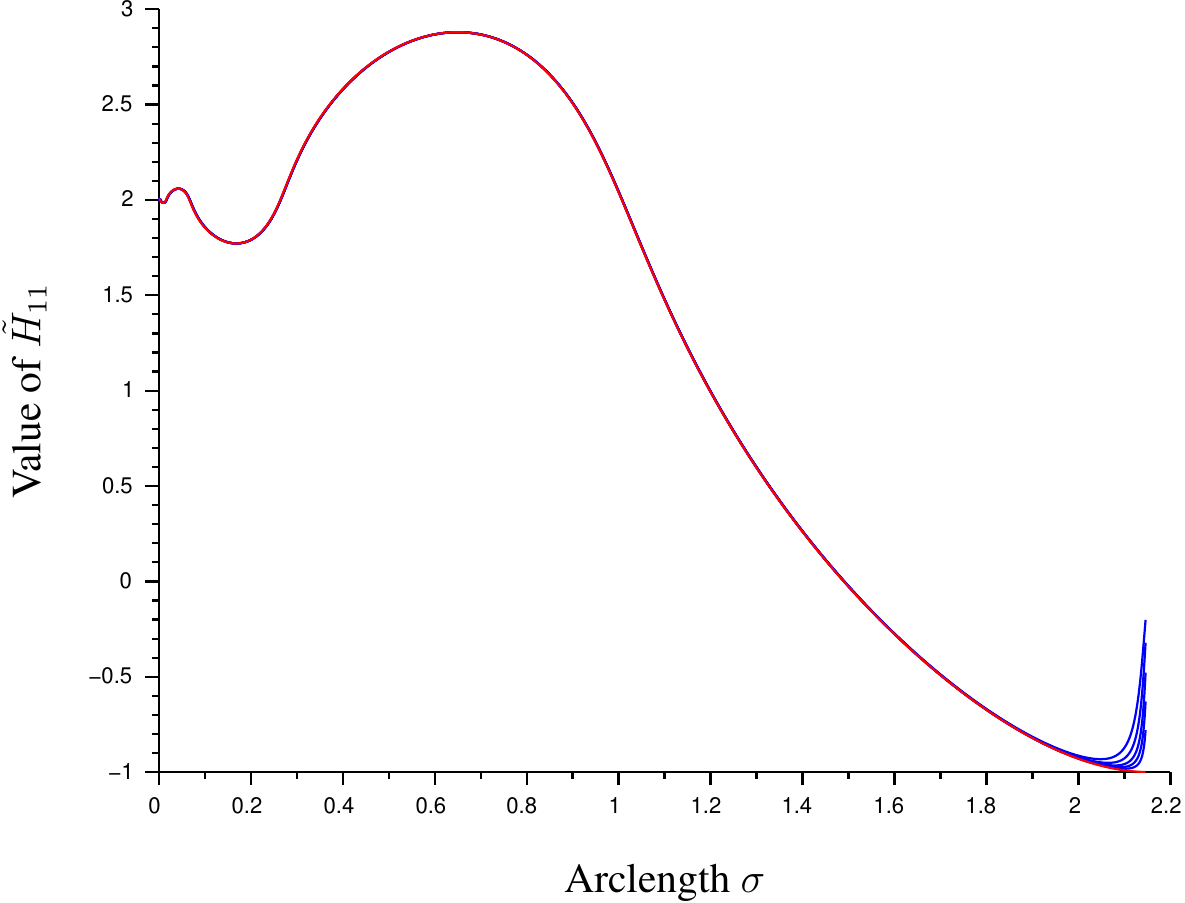}
  \caption{Comparison between the computed value of $\tilde{H}^{[n]}_{11}$ (blue curves) and the actual value of $\partial_{11}V(\rho_{t_n})$ (red), parametrized by the arclength of the instanton $\sigma \in [0,L]$. The smaller the mesh size $\theta$, the better the convergence at the saddle-point. The different choices of $\theta$ correspond to the values $2000$, $4000$, $8000$, $16000$ and $40000$ for the number of steps $N=L/\theta$.}
  \label{fig:HessQP}
\end{figure}

When the dimension $d$ is large, solving the CARE~\eqref{eq:CARE-implicit} at each step of the algorithm may turn out to be costly and lessen the interest of the implicit scheme. In such cases, alternative approaches such as the `fundamental solution method' from~\cite{DieEir94} can be employed. The latter method also allows to implement higher-order schemes.

\subsection{Singularity at the saddle-point}\label{ss:6.5}

When $\sigma$ approaches $L$, the instanton becomes close to the saddle-point $x^*$ and the value of $\|b(\tilde{\rho}_\sigma)\|$ goes to $0$. Therefore the integration of~\eqref{eq:MREsigma} becomes sensitive to the fact that the denominator in the right-hand side takes small values, which causes the singularity observed on Figure~\ref{fig:HessQP}. In order to overcome this issue, we note that, similarly to the initial value $\bar{H} = \tilde{H}_0$, the terminal value
\begin{equation}
  \tilde{H}_L = \lim_{t \to +\infty} H_t = \nabla V(x^*) = H^*
\end{equation}
can be computed by solving the stationary matrix Riccati equation
\begin{equation}
  2H^* = (Q^*)^\top H^* + H^* Q^*, \qquad Q^* = -\nabla b(x^*).
\end{equation}
This remark allows us to implement the following interpolation procedure.
\begin{enumerate}
  \item Fix a threshold $\delta$ such that $\theta \ll \delta \ll L$.
  \item Among the indices $n$ such that $n\theta \geq L-\delta$, select the index $n^*$ for which the linear continuation
  \begin{equation}
    \tilde{H}^{[n]} + \frac{\tilde{H}^{[n+1]}-\tilde{H}^{[n]}}{\theta}(L-n\theta)
  \end{equation}
  is the closest to the terminal value $H^*$, for a given matrix norm.
  \item For $n$ between $n^*$ and the total number of steps $N$, replace the estimation $\tilde{H}^{[n]}$ of $\tilde{H}_{n \theta}$ with the linear interpolation
  \begin{equation}
    \tilde{H}'^{[n]} = \frac{N-n}{N-n^*} \tilde{H}^{[n^*]} + \frac{n-n^*}{N-n^*} H^*.
  \end{equation}
\end{enumerate}

This procedure allows to alleviate the singularity at the saddle-point, as is shown on Figure~\ref{fig:HessQP-saddle}. We observe that the computed value of the first coefficient of $\tilde{H}_\sigma$ is correct up to an error of the order of magnitude $5\%$ (respectively $1\%$) for the coarsest mesh size $\theta = L/2000$ (respectively the finest mesh size $\theta = L/40000$), localized in the area close to the critical point, for distances of order $10\%$ of $L$ or less.

\begin{figure}[htbp]
  \centering
  \includegraphics[width=.9\textwidth]{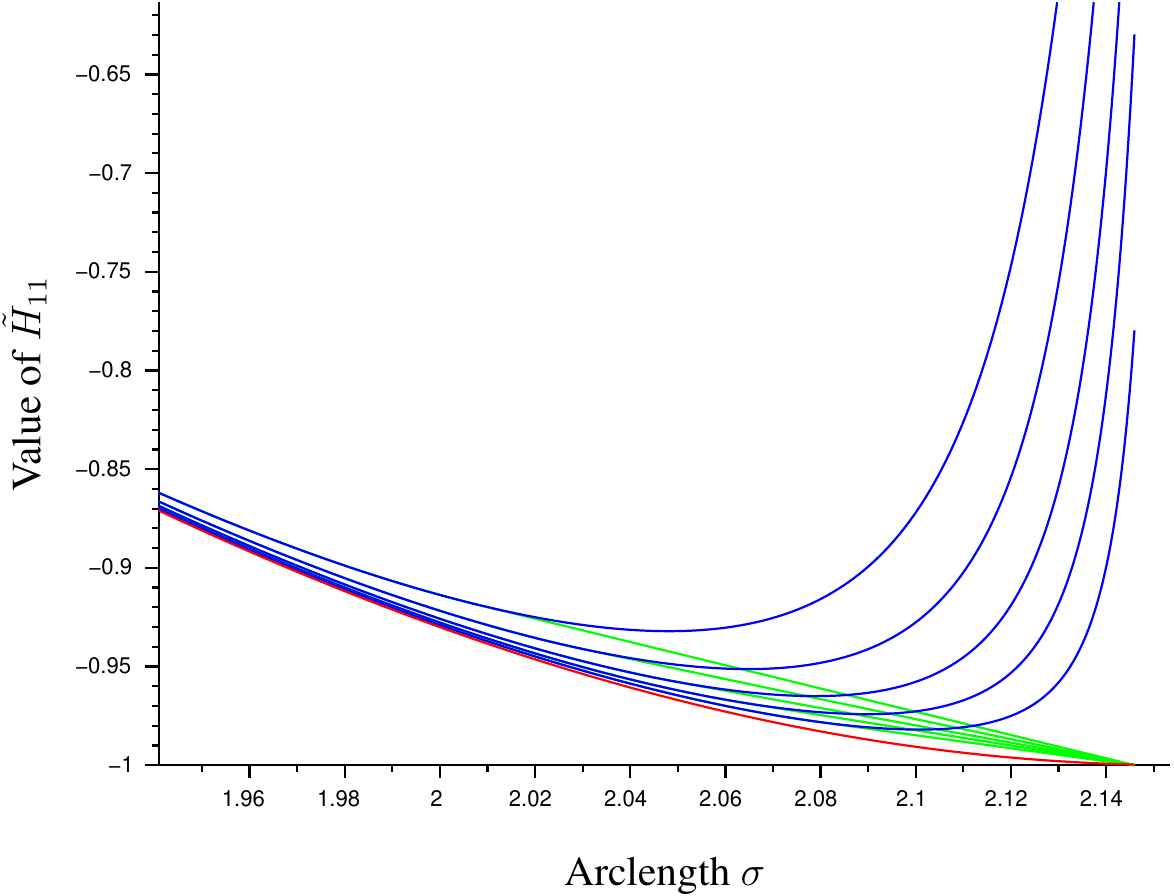}
  \caption{A zoom on the numerical computation of the first coefficient of $\tilde{H}_\sigma$ when $\sigma \uto L$. The red curve is the actual value. The blue curves are the values of $\tilde{H}^{[n]}_{11}$ already shown on Figure~\ref{fig:HessQP}. The green curves are the values of $\tilde{H}'^{[n]}_{11}$ obtained by the interpolation procedure. Here, $\delta=0.2$. }
  \label{fig:HessQP-saddle}
\end{figure}

\subsection{Evaluation of $J(t)$}\label{ss:6.6}

From the quantity $J(t)$, defined by~\eqref{eq:J-instanton}, let us define $\tilde{J}(\sigma)$ for $\sigma \in [0,L]$ by
\begin{equation}
  \tilde{J}(\sigma_t) = J(t).
\end{equation}
The quantity $\tilde{J}(\sigma)$ is the value of the prefactor at the point with arclength $\sigma$ on the instanton. These values, computed from the numerical resolution of the matrix Riccati equation for $(H_t)_{t \in \R}$, are plotted on Figure~\ref{fig:Jlength} (the interpolation procedure at the saddle-point discussed in the previous subsection is employed), for several choices of the mesh size $\theta$. The actual values of $\tilde{J}(\sigma)$, computed from the analytical expression of $V$, are provided as a benchmark. We observe a good agreement, up to an error of the order of magnitude $1\%$ localized in the area close to the critical point, for distances of order $10\%$ of $L$ or less, which supports the efficiency of our method. The evolution of the discretization error on $\tilde{J}(L)$ as a function of $\theta$ is plotted on Figure~\ref{fig:errorJ}, it is observed to be proportional to $\sqrt{\theta}$.

\begin{figure}[htbp]
  \centering
  \includegraphics[width=.9\textwidth]{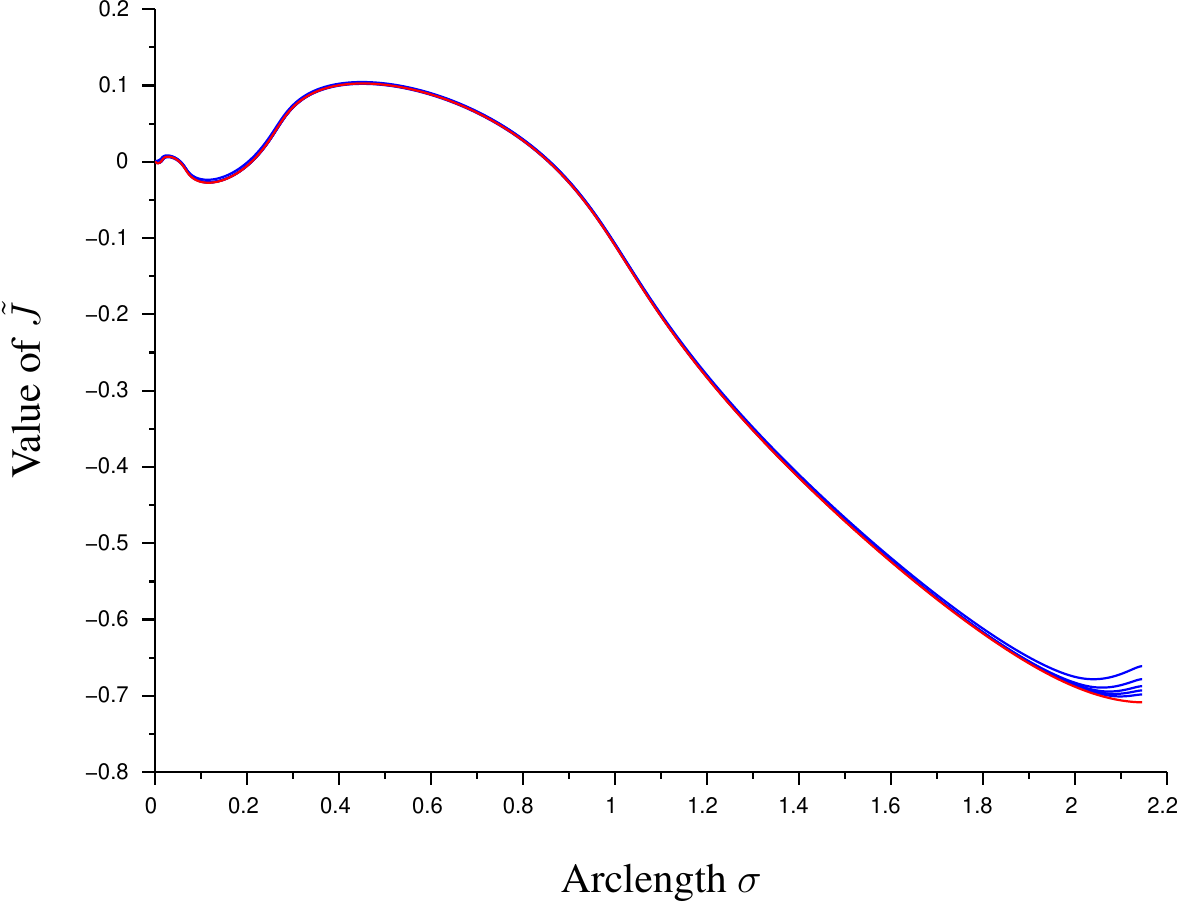}
  \caption{Comparison between the computed value of $\tilde{J}(\sigma)$ (blue curves) and its actual value (red curve), for $\sigma \in [0,L]$. The closer the curve, the smaller $\theta$. The different choices of $\theta$ correspond to the values $2000$, $4000$, $8000$, $16000$ and $40000$ for the number of steps $N=L/\theta$.}
  \label{fig:Jlength}
\end{figure}

\begin{figure}[htbp]
  \centering
  \includegraphics[width=.9\textwidth]{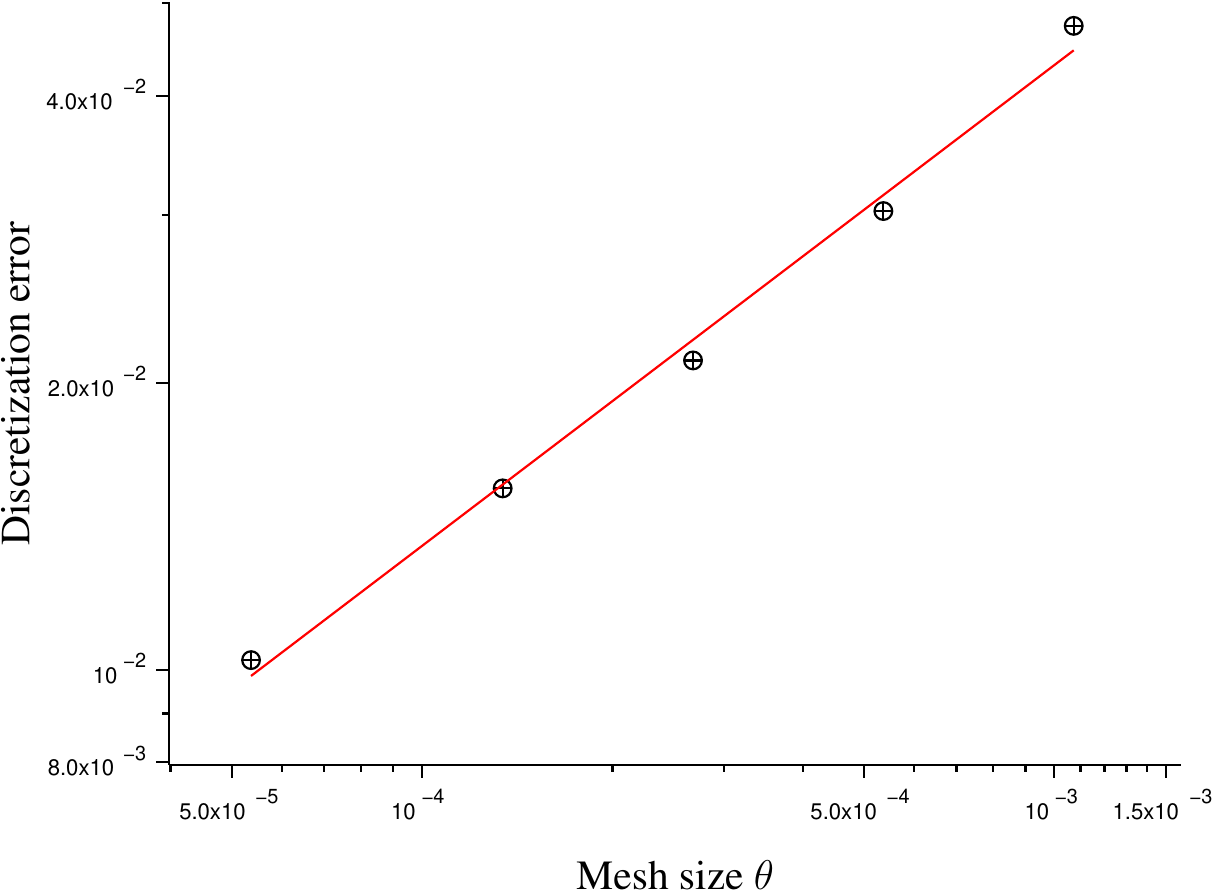}
  \caption{Log-log plot of the discretization error on $\tilde{J}(L)$ as a function of $\theta$, with linear fit of slope $0.50$.}
  \label{fig:errorJ}
\end{figure}


\section{Summary and relation to recent works}\label{s:conc}

In this conclusive section, we summarize the main contributions of the article, on both theoretical and numerical aspects. We also compare these results with other recent works, in particular~\cite{GraSchVan} which was released during the last stages of our work and contains several related ideas and results.

\subsection{Theoretical contributions}

At the conceptual level, the main original contribution of the article is the derivation in Section~\ref{s:path} of a sharp equivalent, when $\epsilon \dto 0$, to the prefactor $C^\epsilon$ of the stationary distribution $P^\epsilon$. It is done in two steps.
\begin{itemize}
  \item[(i)] We perform an asymptotic expansion in the path integral formulation of $P^\epsilon$ (see Eq.~\eqref{eq:Peps:1}) in order to relate the prefactor $C^\epsilon$ with the process of scaled fluctuations $Y^x$ defined by~\eqref{eq:Y}, through the identity~\eqref{eq:u-PI}.
  \item[(ii)] We express $C^\epsilon$ in terms of two quantities $\eta^x_t$ and $K^x_t$, which are related by the matrix Riccati equation~\eqref{eq:Riccati}. We then solve explicitly this equation in the Appendix by a quadrature method, which finally allows to recover a sharp equivalent for $C^\epsilon$.
\end{itemize} 
As is recalled in Section~\ref{s:C}, the expression of the sharp equivalent for $C^\epsilon$ is not new and was already derived by a WKB approximation in~\cite{BouRey16}. Therefore the real novelty here is the sketch of the argument, and in particular the resolution of the matrix Riccati equation.

The fact that large deviation prefactors induced by Gaussian fluctuations around action minimizing paths can be described in terms of solutions to matrix Riccati equations has already been observed in various contexts~\cite{Lud75,MaiSte97,FerGra}. It was put forth in the recent work~\cite{GraSchVan} by Grafke, Schäfer and Vanden-Eijnden, who conducted a thorough derivation of expressions of large deviation prefactors in terms of solutions to matrix Riccati equations, both for finite-time observables and quantities related to stationary distribution. Their work includes many illustrative examples and generalization towards nongaussian noises and infinite-dimensional systems.

Some fundamental ideas in~\cite{GraSchVan} are similar to the present paper; for instance, the use of the Girsanov transform in Proposition~2.2 there makes the scaled fluctuation process $Y^x$ appear in the prefactor $C^\epsilon$ in an equivalent way to our use of the path integral formalism. However, both works differ on several methodological aspects. In particular, both the formulation of matrix Riccati equations and the expression of prefactor asymptotics in~\cite{GraSchVan} involve the fact that the fluctuation path $\varphi^x$ is defined by the forward-backward \emph{Hamiltonian system}
\begin{equation}
  \dot{\varphi}^x_s = b(\varphi^x_s) + 2 \theta^x_s, \qquad \dot{\theta}^x_s = -(\nabla b(\varphi^x_s))^\top \theta^x_s, \qquad s \in [0,t],
\end{equation}
with suitable limit conditions on $\varphi^x_0$ and $\theta^x_t$ depending on the large deviation quantity on which prefactors are computed. In contrast, in the present article, the definition~\eqref{eq:varphi} of fluctuation paths relies on the transverse decomposition~\eqref{eq:decomp} of $b$. As is noted in~\cite[Section~3.2]{GraSchVan}, this is due to the fact that we are merely interested in quantities related with infinite time horizon. Still, this allows us to then derive prefactor asymptotics from the explicit resolution of the matrix Riccati equation~\eqref{eq:Riccati} by quadrature, which as is argued above is the main theoretical contribution of our article and is inherent to the use of the transverse decomposition.

\subsection{Numerical contributions}

The second main contribution of the paper is the formalization in Section~\ref{s:num} of a complete numerical method to compute all quantities involved in the prefactor $C^\epsilon$. In fact, based on the connections recalled in Section~\ref{s:exit} between this prefactor and mean exit times, our numerical procedure enables one to evaluate prefactors to transition times between metastable sets, as is illustrated in Section~\ref{s:na}.

Numerical schemes, based on the resolution of matrix Riccati equations, are also discussed in~\cite{GraSchVan} (but the study of metastable settings is not covered there). A common feature of both works is the preliminary reparametrization of these equations by the arclength of the fluctuation path, which allows to address their long-time behavior. The singularity at $\sigma=0$ induced by this reparametrization, which is addressed in Subsection~\ref{ss:integr}, is also discussed in Remark~3.4 and Appendix~A of~\cite{GraSchVan}, and solved with similar arguments. 

On the other hand, Paskal and Cameron~\cite{PasCam} recently devised a mesh-based method dubbed `Efficient Jet Marcher' which allows to compute the values of $V$ and $\nabla V$ on a grid (see Remark~\ref{rk:schemes}). In Section~6.2 of their article, they applied this method to evaluate the prefactor $L^\epsilon_D$ from~\eqref{eq:LDek} of the mean transition time in a two-dimensional metastable setting. This method, which radically differs from the path-inspired approaches from the present article and~\cite{GraSchVan}, is however currently limited to low-dimensional situations.


\appendix\section{Appendix}

The Appendix is organized as follows. In Section~\ref{app:OE}, the notion of time ordered exponential is introduced and a few properties are stated. Section~\ref{app:riccati} is dedicated to the resolution of the backward matrix Riccati equation appearing in~\eqref{eq:Riccati}. Finally, Section~\ref{app:pfeta} presents the proof of~\eqref{eq:limeta}.

Throughout the Appendix, we work under Assumptions~(A1--4). Furthermore, as is argued in the beginning of Section~\ref{s:path}, our overall purpose here is to emphasize the connections between fluctuation paths, large deviation prefactors, functional Gaussian determinants and matrix Riccati equations. Therefore, we chose not to obscure the exposition of our arguments with the exhaustive mathematical justification of technical details, which can however be checked on a case-by-case basis.


\subsection{Time ordered exponentials}\label{app:OE}

Throughout this section we let $A=(A_t)_{t \leq 0}$ be a bounded family of matrices of size $d \times d$. The choice of $(-\infty,0]$ as the set of times is convenient for our purpose but the contents of this section could easily be adapted to any interval.

For all $t_0, t \leq 0$, we denote by 
\begin{equation}
  M_t = \OrdExp{\int_{s=t_0}^t A_s \dd s} \in \R^{d \times d}
\end{equation}
the solution to the (two-sided) Cauchy problem
\begin{equation}
  \left\{\begin{aligned}
    \dot{M}_t & = A_t M_t, \qquad t \leq 0,\\
    M_{t_0} & = I_d.
  \end{aligned}\right.
\end{equation}
It is called the \emph{time ordered exponential} of $A$ from $t_0$ to $t$. It is related with the notion of \emph{path ordering} which is in particular used in quantum field theory. However, in the context of the present article, we are only dealing with bounded, finite-dimensional matrices, therefore the following properties, which will be used in the sequel, are elementary consequences of the Cauchy--Lipschitz theorem for linear ordinary differential equations.
\begin{itemize}
  \item[(i)] For all $t_0, t_1, t_2 \leq 0$, $\ordexp{\int_{s=t_0}^{t_2} A_s \dd s} = \ordexp{\int_{s=t_1}^{t_2} A_s \dd s}\ordexp{\int_{s=t_0}^{t_1} A_s \dd s}$.
  \item[(ii)] For all $t_0, t_1 \leq 0$, $\ordexp{\int_{s=t_0}^{t_1} A_s \dd s}$ is invertible and $\ordexp{\int_{s=t_0}^{t_1} A_s \dd s}^{-1} = \ordexp{\int_{s=t_1}^{t_0} A_s \dd s}$.
  \item[(iii)] For all $t_0, t \leq 0$, $\det\ordexp{\int_{s=t_0}^t A_s \dd s} = \det(\exp(\int_{s=t_0}^t A_s \dd s)) = \exp(\int_{s=t_0}^t \tr A_s \dd s)$. As a consequence, if $(M_t)_{t \leq 0}$ solves the matrix ordinary differential equation $\dot{M}_t = A_t M_t$, then $m_t = \det M_t$ satisfies $\dot{m}_t = m_t \tr A_t$.
\end{itemize}


\subsection{Solving the backward matrix Riccati equation}\label{app:riccati}

In this section, we solve the backward matrix Riccati equation which corresponds to the second line of~\eqref{eq:Riccati}. In order to alleviate the notations, we drop the superscript notation $x$ on the quantities $\varphi^x$, $Q^x_s$, $R^x_s$, etc., therefore we are led to consider the backward Cauchy problem
\begin{equation}\label{eq:Riccati2}
  \left\{\begin{aligned}
    & \dot{K}_t = K_t^2 + Q_t^\top K_t + K_t Q_t - 2R_t, \quad t < 0,\\
    & \lim_{t \uto 0} K_t^{-1} = 0.
  \end{aligned}\right.
\end{equation}

We employ a quadrature method. Let us first define 
\begin{equation}\label{eq:K0}
  K^0_t = -2\nabla^2 V(\varphi_t).
\end{equation}
By Proposition~\ref{prop:RiccH} and Remark~\ref{rk:forwback}, we have
\begin{equation}\label{eq:K0dot}
  \dot{K}^0_t = \left(K^0_t\right)^2 + Q_t^\top K^0_t + K^0_t Q_t - 2R_t;
\end{equation}
in other words, $K^0_t$ is a particular solution to the backward Riccati equation~\eqref{eq:Riccati2}, but with a different behavior when $t \uto 0$. 

For all $t \leq 0$, we now let 
\begin{equation}\label{eq:A}
  A_t = \nabla^2 V(\varphi_t) + \nabla \ell(\varphi_t) = 2\nabla^2 V(\varphi_t) - Q_t,
\end{equation}
and introduce 
\begin{equation}\label{eq:Z}
  Z_t = \int_{s=t}^0 \OrdExp{\int_{r=s}^t A_r \dd r}\OrdExp{\int_{r=s}^t A_r \dd r}^{\top}\dd s.
\end{equation}
Notice that since $\nabla V$ and $\ell$ are assumed to be smooth, and by definition, the fluctuation path $\varphi^x$ is bounded in $\R^d$, the family of matrices $(A_t)_{t \leq 0}$ is bounded and therefore matches the setting of Section~\ref{app:OE}. Then $Z_t$ solves the time-dependent Lyapunov equation
\begin{equation}\label{eq:ZLyap}
  \dot{Z}_t = -I_d + A_t Z_t + Z_t A_t^\top,
\end{equation}
so that
\begin{equation}\label{eq:ZBer}
  \frac{\dd}{\dd t} Z^{-1}_t = -Z_t^{-1} \dot{Z}_t Z^{-1}_t = (Z^{-1}_t)^2 - A_t^{\top} Z_t^{-1} - Z_t^{-1} A_t.
\end{equation}
We may now define
\begin{equation}\label{eq:Kt}
  K_t = K^0_t + Z_t^{-1},
\end{equation}
for all $t<0$, and check that $\dot{K}_t = K_t^2 + Q_t^\top K_t + K_t Q_t - 2R_t$ using~\eqref{eq:K0}, \eqref{eq:K0dot}, \eqref{eq:A} and \eqref{eq:ZBer}. Furthermore, $K^0_t$ remains bounded when $t \uto 0$ whereas $|t|Z_t^{-1}$ converges to $I_d$, which implies that $K_t^{-1}$ converges to $0$.

As a conclusion, $K_t$ is the solution to~\eqref{eq:Riccati2}, which yields the formula~\eqref{eq:solK} in Section~\ref{ss:FK}.

\begin{remark}\label{rk:newlim}
  It follows from the expression of $K_t$ that $|t|K_t$ converges to $I_d$ when $t \uto 0$. This implies that $|t|^d\det K_t$ converges to $1$, so that the third condition in~\eqref{eq:limit} takes the more explicit form 
  \begin{equation}
    \lim_{t \uto 0} |t|^{-d} \eta_t = 1.
  \end{equation}
\end{remark}


\subsection{Proof of~\eqref{eq:limeta}}\label{app:pfeta}

In this section we prove the identity~\eqref{eq:limeta}. The solution $(K_t)_{t<0}$ to the backward matrix Riccati equation was constructed in the previous section, and it is easily observed that $(\eta_t)_{t<0}$ is defined up to a multiplicative constant by
\begin{equation}\label{eq:etaeta}
  \forall t_1, t_2 < 0, \qquad \frac{\eta_{t_2}}{\eta_{t_1}} = \exp\left(-\int_{s=t_1}^{t_2} \tr K_s \dd s\right).
\end{equation}
The appropriate multiplicative constant shall be chosen in accordance with the third condition of~\eqref{eq:limit}, which shall then provide the correct $t \to -\infty$ limit for $\eta_t$.

We first introduce the notation
\begin{equation}
  U_t = \int_{s=t}^0 \OrdExp{\int_{r=s}^0 A_r \dd r}\OrdExp{\int_{r=s}^0 A_r \dd r}^{\top}\dd s,
\end{equation}
for all $t \leq 0$.

\begin{lemma}\label{lem:etaeta:1}
  For all $t_1, t_2 < 0$,
  \begin{equation}\label{eq:lem:etaeta:1}
    \frac{\eta_{t_2}}{\eta_{t_1}} = \frac{\det U_{t_2}}{\det U_{t_1}}\exp\left(2 \int_{s=t_1}^{t_2} \Delta V(\varphi_s) \dd s\right).
  \end{equation}
\end{lemma}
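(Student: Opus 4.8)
The strategy is to unwind the definition \eqref{eq:etaeta} of $\eta_t$ and reduce the claim to the single identity $\tr Z_s^{-1} = -\tfrac{\dd}{\dd s}\log\det U_s$. First I would combine \eqref{eq:etaeta} with the explicit solution $K_t = -2\nabla^2 V(\varphi_t) + Z_t^{-1}$ obtained in \eqref{eq:Kt}--\eqref{eq:solK}. Since $\tr\nabla^2 V(\varphi_s) = \Delta V(\varphi_s)$, this gives
\[
  \log\frac{\eta_{t_2}}{\eta_{t_1}} = -\int_{s=t_1}^{t_2}\tr K_s\,\dd s = 2\int_{s=t_1}^{t_2}\Delta V(\varphi_s)\,\dd s - \int_{s=t_1}^{t_2}\tr Z_s^{-1}\,\dd s,
\]
so that, comparing with \eqref{eq:lem:etaeta:1}, it suffices to establish $\int_{s=t_1}^{t_2}\tr Z_s^{-1}\,\dd s = \log\det U_{t_1} - \log\det U_{t_2}$.

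Next I would relate $Z_t$ and $U_t$ through the time ordered exponential. Writing $P_t = \ordexp{\int_{r=t}^0 A_r\,\dd r}$ and applying the concatenation property~(i) of Section~\ref{app:OE}, one has $\ordexp{\int_{r=s}^0 A_r\,\dd r} = P_t\,\ordexp{\int_{r=s}^t A_r\,\dd r}$ for $s \le t \le 0$, whence $U_t = P_t Z_t P_t^{\top}$. Because the time ordered exponentials are invertible (property~(ii)) and the interval $[t,0]$ is nondegenerate when $t<0$, the matrices $Z_t$ and $U_t$ are symmetric positive definite, so $Z_t^{-1} = P_t^{\top}U_t^{-1}P_t$ and, by cyclic invariance of the trace, $\tr Z_t^{-1} = \tr\!\left(U_t^{-1}P_tP_t^{\top}\right)$.

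Then I would differentiate $U_t$ directly in its lower integration bound: the integrand evaluated at $s=t$ is $P_tP_t^{\top}$, so $\dot U_t = -P_tP_t^{\top}$. Jacobi's formula (equivalently property~(iii) applied to the equation $\dot U_t = (\dot U_t\,U_t^{-1})\,U_t$) then yields $\tfrac{\dd}{\dd t}\log\det U_t = \tr(U_t^{-1}\dot U_t) = -\tr(U_t^{-1}P_tP_t^{\top}) = -\tr Z_t^{-1}$. Integrating between $t_1$ and $t_2$ gives the required relation, and exponentiating the resulting expression for $\log(\eta_{t_2}/\eta_{t_1})$ proves the lemma.

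The computation is essentially mechanical; the only points deserving care are the orientation of the time ordered exponentials when factoring $U_t = P_t Z_t P_t^{\top}$, the sign arising from differentiating $U_t$ with respect to its lower bound, and the positivity of $U_t$ ensuring that $\log\det U_t$ and $U_t^{-1}$ make sense for $t<0$ --- none of these is a genuine obstacle, but each is where a sign slip could occur. An equivalent route avoids $P_t$ altogether: from the Lyapunov equation \eqref{eq:ZLyap} one gets $\tfrac{\dd}{\dd t}\log\det Z_t = -\tr Z_t^{-1} + 2\tr A_t$, while property~(iii) gives $\det U_t = \exp\!\left(2\int_{r=t}^0\tr A_r\,\dd r\right)\det Z_t$; the two $\tr A$ contributions cancel and the same conclusion follows.
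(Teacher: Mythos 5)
Your proposal is correct and follows essentially the same route as the paper's proof: reduce via $K_s = -2\nabla^2 V(\varphi_s) + Z_s^{-1}$, relate $U_t = P_t Z_t P_t^{\top}$ through properties~(i)--(ii) of time ordered exponentials, compute $\dot U_t = -P_tP_t^{\top}$, and conclude $\tfrac{\dd}{\dd t}\log\det U_t = -\tr Z_t^{-1}$ (the paper phrases this last step via property~(iii) rather than Jacobi's formula, which is the same computation). Your alternative closing argument via the Lyapunov equation \eqref{eq:ZLyap} is also valid but only a cosmetic variant.
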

\begin{proof}
  We first inject the formula~\eqref{eq:Kt} for $K_s$ into~\eqref{eq:etaeta} and obtain
  \begin{equation}
    \frac{\eta_{t_2}}{\eta_{t_1}} = \exp\left(2 \int_{s=t_1}^{t_2} \Delta V(\varphi_s) \dd s - \int_{s=t_1}^{t_2} \tr Z_s^{-1} \dd s\right).
  \end{equation}
  
  By Property~(i) of time ordered exponentials, for all $t<0$,
  \begin{equation}\label{eq:ZtUt}
    Z_t = \OrdExp{\int_{r=0}^t A_r \dd r}U_t\OrdExp{\int_{r=0}^t A_r \dd r}^\top,
  \end{equation}
  so that
  \begin{equation}
    \OrdExp{\int_{r=t}^0 A_r \dd r}^\top = Z_t^{-1}\OrdExp{\int_{r=0}^t A_r \dd r}U_t,
  \end{equation}
  where we have used Property~(ii) of time ordered exponentials. On the other hand, 
  \begin{equation}
    \dot{U}_t = -\OrdExp{\int_{r=t}^0 A_r \dd r}\OrdExp{\int_{r=t}^0 A_r \dd r}^{\top},
  \end{equation}
  therefore
  \begin{equation}
    \dot{U}_t = -\OrdExp{\int_{r=t}^0 A_r \dd r}Z_t^{-1}\OrdExp{\int_{r=0}^t A_r \dd r}U_t
  \end{equation}
  and by Property~(iii) of time ordered exponentials, $m_t = \det U_t$ satisfies
  \begin{equation}
    \dot{m}_t = - m_t\tr\left(\OrdExp{\int_{r=t}^0 A_r \dd r}Z_t^{-1}\OrdExp{\int_{r=0}^t A_r \dd r}\right)
  \end{equation}
  which reduces to
  \begin{equation}
    \dot{m}_t = - m_t\tr Z_t^{-1}
  \end{equation}
  thanks to Property~(ii) of time ordered exponentials again. As a consequence,
  \begin{equation}
    \exp\left(-\int_{s=t_1}^{t_2} \tr Z_s^{-1} \dd s\right) = \frac{m_{t_2}}{m_{t_1}} = \frac{\det U_{t_2}}{\det U_{t_1}},
  \end{equation}
  which completes the proof.
\end{proof}

In the next lemma we describe the $t_2 \uto 0$ limit of the ratio $\eta_{t_2}/\eta_{t_1}$.
\begin{lemma}
  For all $t_1 < 0$,
  \begin{equation}
    \eta_{t_1} = \det U_{t_1} \exp\left(-2 \int_{s=t_1}^0 \Delta V(\varphi_s) \dd s\right).
  \end{equation}
\end{lemma}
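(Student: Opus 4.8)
The plan is to deduce the lemma directly from Lemma~\ref{lem:etaeta:1} by letting $t_2 \uto 0$. Rearranging~\eqref{eq:lem:etaeta:1}, for any $t_1 < t_2 < 0$ we have
\begin{equation*}
  \eta_{t_1} = \frac{\det U_{t_1}}{\det U_{t_2}}\,\eta_{t_2}\,\exp\left(-2\int_{s=t_1}^{t_2}\Delta V(\varphi_s)\dd s\right).
\end{equation*}
Since $t \mapsto U_t$ is continuous and $s \mapsto \Delta V(\varphi_s)$ is bounded (because $V$ is $C^2$ and $\varphi^x$ is bounded), the factors $\det U_{t_1}$ and $\exp(-2\int_{s=t_1}^{t_2}\Delta V(\varphi_s)\dd s)$ pose no difficulty: the latter converges to $\exp(-2\int_{s=t_1}^0 \Delta V(\varphi_s)\dd s)$. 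Hence it suffices to show that $\eta_{t_2}/\det U_{t_2} \to 1$ as $t_2 \uto 0$.

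For the numerator, this is already recorded in Remark~\ref{rk:newlim}: $\lim_{t\uto 0} |t|^{-d}\eta_t = 1$. For the denominator, I would prove the matching asymptotics $|t|^{-d}\det U_t \to 1$. By the definition of the time ordered exponential in Section~\ref{app:OE}, $\ordexp{\int_{r=s}^0 A_r\dd r}$ is the value at time $0$ of the solution to $\dot M_r = A_r M_r$ started from $I_d$ at time $s$; since $(A_r)_{r\leq 0}$ is bounded, this matrix depends continuously on $s$ and equals $I_d$ when $s=0$. Therefore the integrand $s\mapsto \ordexp{\int_{r=s}^0 A_r\dd r}\ordexp{\int_{r=s}^0 A_r\dd r}^\top$ defining $U_t$ is continuous on a neighbourhood of $0$ and tends to $I_d$ as $s\uto 0$, so its average over $[t,0]$, namely $|t|^{-1}U_t$, tends to $I_d$ as $t\uto 0$; taking determinants gives $|t|^{-d}\det U_t \to 1$.

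Combining the two asymptotics, $\eta_{t_2}/\det U_{t_2} = (|t_2|^{-d}\eta_{t_2})/(|t_2|^{-d}\det U_{t_2}) \to 1$, and passing to the limit $t_2\uto 0$ in the displayed identity yields $\eta_{t_1} = \det U_{t_1}\exp(-2\int_{s=t_1}^0 \Delta V(\varphi_s)\dd s)$, as claimed. The only step requiring any care is the small-time behaviour of $\det U_{t_2}$, but it reduces to continuous dependence of linear ODE solutions on their initial time; all the substantive work has already been done in Lemma~\ref{lem:etaeta:1} and Remark~\ref{rk:newlim}, so I do not anticipate a genuine obstacle.
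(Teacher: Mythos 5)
Your proof is correct and follows essentially the same route as the paper: pass to the limit $t_2 \uto 0$ in Lemma~\ref{lem:etaeta:1}, using Remark~\ref{rk:newlim} together with $|t_2|^{-1}U_{t_2} \to I_d$ to conclude that $\eta_{t_2}/\det U_{t_2} \to 1$. The only difference is that you spell out the small-time asymptotics of $U_{t_2}$ in more detail than the paper, which simply invokes the definition of $U_t$.
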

\begin{proof}
  We fix $t_1<0$ and let $t_2$ grow to $0$ in~\eqref{eq:lem:etaeta:1}. By the definition of $U_t$, $|t_2|^{-1}U_{t_2}$ converges to $I_d$, so that by Remark~\ref{rk:newlim},
  \begin{equation}
    \lim_{t_2 \uto 0} \frac{\eta_{t_2}}{\det U_{t_2}} = 1,
  \end{equation}
  which completes the proof.
\end{proof}

We finally address the $t_1 \to -\infty$ limit of the expression obtained for $\eta_{t_1}$.

\begin{lemma}
  We have
  \begin{equation}
    \lim_{t_1 \to -\infty} \eta_{t_1} = \frac{1}{2^d \det(\nabla^2 V(\bar{x}))}\exp\left(2\int_{s=-\infty}^0 \dive \ell(\varphi_s)\dd s\right).
  \end{equation}
\end{lemma}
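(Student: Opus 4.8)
The plan is to feed the relation~\eqref{eq:ZtUt} between $U_t$ and $Z_t$ into the previous lemma, which has already reduced the problem to the behaviour of $\det U_{t_1}$ as $t_1\to-\infty$. This will re-express $\eta_{t_1}$ in terms of $\det Z_{t_1}$ alone, after which the $t_1\to-\infty$ limit of $\det Z_{t_1}$ is identified from an algebraic Lyapunov equation tied to $\nabla^2 V(\bar x)$.

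\emph{First step: elimination of $\Delta V$.} Taking determinants in~\eqref{eq:ZtUt} and using Property~(iii) of time ordered exponentials gives $\det Z_t=\det U_t\exp\left(2\int_{r=0}^t\tr A_r\,\dd r\right)$, hence $\det U_{t_1}=\det Z_{t_1}\exp\left(2\int_{s=t_1}^0\tr A_s\,\dd s\right)$. Since $\tr A_s=\Delta V(\varphi_s)+\dive\ell(\varphi_s)$ by~\eqref{eq:A}, plugging this into the identity $\eta_{t_1}=\det U_{t_1}\exp\left(-2\int_{s=t_1}^0\Delta V(\varphi_s)\,\dd s\right)$ of the previous lemma makes the $\Delta V$ contributions cancel, and leaves
\[
  \eta_{t_1}=\det Z_{t_1}\exp\left(2\int_{s=t_1}^0\dive\ell(\varphi_s)\,\dd s\right).
\]
It then only remains to prove that $\det Z_{t_1}\to 1/(2^d\det\nabla^2 V(\bar x))$ as $t_1\to-\infty$.

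\emph{Second step: the limit of $Z_{t_1}$.} Because $\varphi_t\to\bar x$ and $\nabla V(\bar x)=\ell(\bar x)=0$, the family $A_t$ from~\eqref{eq:A} converges to $A_\infty:=\nabla^2 V(\bar x)+\nabla\ell(\bar x)$, so~\eqref{eq:ZLyap} is an asymptotically autonomous Lyapunov equation, whose stationary version is $A_\infty Z_\infty+Z_\infty A_\infty^\top=I_d$. I would argue that $Z_{t_1}$ converges, as $t_1\to-\infty$, to the unique symmetric solution $Z_\infty$ of this equation. This hinges on all eigenvalues of $A_\infty$ having strictly positive real part: that makes the superoperator $X\mapsto A_\infty X+XA_\infty^\top$ invertible (so $Z_\infty$ is unique) and turns $Z_\infty$ into a backward-in-time attractor for~\eqref{eq:ZLyap}, the nonconstancy of $A_t$ being absorbed by a routine asymptotically-autonomous comparison. (Equivalently, one may observe that $Z_t^{-1}=K_t+2\nabla^2 V(\varphi_t)$ and show that $K_t\to0$: since $\nabla V(\bar x)=0$ one has $R_t\to0$, so the limiting backward Riccati equation $\dot K=K^2+\bar Q^\top K+K\bar Q$ has $0$ as a backward-attracting fixed point, giving $Z_\infty^{-1}=2\nabla^2 V(\bar x)$ directly.)

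\emph{Third step: identification of $Z_\infty$, and the obstacle.} Write $\bar H=\nabla^2 V(\bar x)$ and $\bar Q=-\nabla b(\bar x)$; differentiating~\eqref{eq:decomp} at $\bar x$ gives $\nabla\ell(\bar x)=\bar H-\bar Q$, whence $A_\infty=2\bar H-\bar Q$. Using the Lyapunov identity~\eqref{eq:bHLyap}, namely $\bar Q\bar H^{-1}+\bar H^{-1}\bar Q^\top=2I_d$, one computes $A_\infty\bar H^{-1}+\bar H^{-1}A_\infty^\top=4I_d-(\bar Q\bar H^{-1}+\bar H^{-1}\bar Q^\top)=2I_d$, so that $\tfrac12\bar H^{-1}$ solves the very algebraic Lyapunov equation characterizing $Z_\infty$; by uniqueness, $Z_\infty=\tfrac12\bar H^{-1}$ and $\det Z_\infty=2^{-d}/\det\nabla^2 V(\bar x)$, which combined with the first step yields the announced limit. (Since $\bar H\succ0$ by Assumption~(A4), multiplying the identity above by $\bar H$ on the left and right gives $\bar H A_\infty+A_\infty^\top\bar H=2\bar H^2\succ0$, so a standard Lyapunov stability argument shows that $A_\infty$ indeed has spectrum in the open right half-plane, as used in the second step; recall also that the eigenvalues of $\bar Q$ have positive real part since $\bar x$ is a stable equilibrium.) The main obstacle is the second step: converting the heuristic "$Z_t$ relaxes to its stationary solution as $t\to-\infty$" into a genuine convergence statement for this non-autonomous Lyapunov equation; everything else is linear algebra and bookkeeping with time ordered exponentials.
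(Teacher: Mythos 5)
Your proposal is correct and follows essentially the same route as the paper: reduce $\eta_{t_1}$ to $\det Z_{t_1}\exp\bigl(2\int_{t_1}^0\dive\ell(\varphi_s)\,\dd s\bigr)$ via the relation between $U_t$ and $Z_t$, pass to the limit in the Lyapunov equation for $Z_t$, and identify the stationary solution as $\tfrac12\nabla^2V(\bar x)^{-1}$. The only cosmetic difference is that you verify this last point through the Lyapunov identity~\eqref{eq:bHLyap} for $\bar Q$, whereas the paper uses the equivalent relation $\bar D^\top\bar H+\bar H\bar D=0$ obtained by evaluating~\eqref{eq:D2transv} at $\bar x$; your extra care about the spectrum of $A_\infty$ and the convergence $Z_t\to\bar Z$ addresses a step the paper deliberately leaves formal.
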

\begin{proof}
  Let $t_1<0$. Using Properties~(i) and~(iii) of time ordered exponentials and~\eqref{eq:ZtUt}, we rewrite
  \begin{equation}
    \begin{aligned}
      \eta_{t_1} & = \det\left(\exp\left(-\int_{s=t_1}^0 \nabla^2 V(\varphi_s)\dd s\right) U_{t_1} \exp\left(-\int_{s=t_1}^0 \nabla^2 V(\varphi_s)\dd s\right)\right)\\
      & = \exp\left(2 \int_{s=t_1}^0 \dive \ell(\varphi_s)\dd s\right) \det\left(\exp\left(\int_{s=t_1}^0 A_s \dd s\right) U_{t_1} \exp\left(\int_{s=t_1}^0 A_s \dd s\right)^\top\right)\\
      & = \exp\left(2 \int_{s=t_1}^0 \dive \ell(\varphi_s)\dd s\right) \det Z_{t_1}.
    \end{aligned}
  \end{equation}
  We are therefore led to compute the $t_1 \to -\infty$ limit of $Z_{t_1}$. To this aim we recall that $(Z_t)_{t \leq 0}$ solves the time-dependent Lyapunov equation~\eqref{eq:ZLyap}. In addition to the notation $\bar{H} = \nabla^2 V(\bar{x})$ introduced in Section~\ref{ss:barH}, let us denote $\bar{D} = \nabla \ell(\bar{x})$, so that taking the $t \to -\infty$ limit of~\eqref{eq:ZLyap} shows that
  \begin{equation}
    \bar{Z} = \lim_{t \to -\infty} Z_t
  \end{equation}
  satisfies the stationary Lyapunov equation
  \begin{equation}
    (\bar{H}+\bar{D})\bar{Z} + \bar{Z}(\bar{H}+\bar{D})^\top = I_d.
  \end{equation}
  In addition,
  \begin{itemize}
    \item evaluating the identity~\eqref{eq:D2transv} at $\bar{x}$ yields $\bar{D}^\top \bar{H} + \bar{H}\bar{D} = 0$,
    \item $\bar{H}$ is assumed to be positive-definite.
  \end{itemize}
  As a consequence, 
  \begin{equation}
    \bar{Z} = \frac{1}{2} \bar{H}^{-1},
  \end{equation}
  from which we deduce that
  \begin{equation}
    \lim_{t_1 \to -\infty} \eta_{t_1} = \exp\left(2 \int_{s=-\infty}^0 \dive \ell(\varphi_s)\dd s\right) \det\left(\frac{1}{2} \bar{H}^{-1}\right),
  \end{equation}
  and the proof is completed.
\end{proof}

\begin{acknowledgements}
  We would like to thank A. Alfonsi, A. Levitt, G. Stoltz for useful comments on the numerical integration of matrix Riccati equations. We also thank two anonymous referees for their careful reading of the paper which helped improving the presentation of our results.
    
  The research leading to these results has received funding from the European Research Council under the European Union's seventh Framework Programme (FP7/2007-2013 Grant Agreement No. 616811). During the last stage of this work, F. Bouchet received support by a subagreement from the Johns Hopkins University with funds provided by Grant No. 663054 from Simons Foundation. Its contents are solely the responsibility of the authors and do not necessarily represent the official views of Simons Foundation or the Johns Hopkins University. J. Reygner is supported by the French National Research Agency (ANR) under the programs EFI (ANR-17-CE40-0030) and QuAMProcs (ANR-19-CE40-0010).
\end{acknowledgements}

\section*{Data availability and conflict of interest statement}
This theoretical work uses no external dataset. 

The authors have no relevant financial or non-financial interests to disclose, have no competing interests to declare that are relevant to the content of this article. The authors declare no conflict of interest relevant to the content of this article.


\end{document}